\theoremstyle{plain}
\newtheorem{thm}{Theorem}[section]
\newtheorem{lem}[thm]{Lemma}
\newtheorem{prop}[thm]{Proposition}
\newtheorem{cor}[thm]{Corollary}
\newtheorem{con}[thm]{Conjecture}
\theoremstyle{definition}
\newtheorem{definition}[thm]{Definition}
\theoremstyle{remark}
\newtheorem{exa}[thm]{Example}
\newtheorem{rem}[thm]{Remark}
\theoremstyle{plain}
\newtheorem*{thm*}{Theorem}
\newtheorem*{lem*}{Lemma}
\newtheorem*{prop*}{Proposition}
\newtheorem*{cor*}{Corollary}
\newtheorem*{ass*}{Assumptions}
\theoremstyle{definition}
\newtheorem*{definition*}{Definition}
\theoremstyle{remark}
\newtheorem*{rem*}{Remark}
\newtheorem*{prob*}{Problem}
\newtheorem*{exa*}{Example}
\newcommand{\field}[1]{\mathbb{#1}}
\newcommand{\setN}{\field{N}}
\newcommand{\setR}{\field{R}}
\newcommand{\setM}{\mathcal{M}}
\newcommand{\pp}{{\mathbb P}}
\newcommand{\ind}{\mathds{1}}
\newcommand{\bidy}{\underline{r}}
\newcommand{\asky}{\overline{r}}
\newcommand{\bids}{\underline{S}}
\newcommand{\asks}{\overline{S}}
\newcommand{\leqc}{\leq_{\mathrm{c}}}
\newcommand{\conv}{\mathrm{conv}}
\DeclareMathOperator{\sgn}{sgn}
\setlist[enumerate,1]{label=$(\roman*)$, ref=$(\roman*)$}
\setlist[enumerate,2]{label=\alph*), ref=\theenumi\ \alph*}
\begin{document}

\title{Consistency of option prices under bid-ask spreads}
\author{Stefan Gerhold \\ TU Wien \\ sgerhold@fam.tuwien.ac.at
  \and I. Cetin G\"ul\"um \thanks{We acknowledge financial support from the Austrian Science Fund (FWF) under grant P~24880. We thank the anonymous referees, as well as
  seminar and conference participants at Berlin, the
  12th German Probability and Statistics Days (Bochum), Le Mans, Ulm, Vienna, Oberwolfach, and the 9th BFS Congress (NYC)
  for helpful questions and comments.}
   \\ TU Wien \\ ismail.cetin.gueluem@gmx.net
}
\date{\today}

\maketitle

\begin{abstract}
Given a finite set of European call option prices on a single underlying, we want to know when there is a market model which is consistent with these prices.
In contrast to previous studies, we allow models where the underlying trades at a bid-ask spread.
The main question then is how large (in terms of a deterministic bound)
this spread must be to explain the given prices. We fully solve this problem in the case
of a single maturity, and give several partial results for multiple maturities.
For the latter, our main mathematical tool is a recent result on approximation by
peacocks [S.~Gerhold, I.C.~G\"ul\"um, arXiv:1512.06640].
\end{abstract}

\bigskip
{\bf Keywords:} Transaction costs, bid-ask spread, call option, martingale, peacock, Strassen's theorem.
\medskip

\section{Introduction}

\setcounter{equation}{0}
\numberwithin{equation}{section}
Calibrating martingales to given option prices is a central topic of mathematical finance,
and it is thus a natural question which sets of option prices admit such a fit, and which do not.
Note that we are not interested in \emph{approximate} model calibration, but in the consistency
of option prices, 
meaning arbitrage-free models that fit the given prices \emph{exactly}.
Put differently, we want to detect arbitrage in given prices.
We do not consider continuous call price surfaces, but restrict to the
(practically more relevant) case of finitely many strikes and maturities.
Therefore, consider a financial asset with finitely many European call options written on it.
In a frictionless setting, the consistency problem is well understood:
Carr and Madan~\cite{CaMa05} assume that interest rates, dividends and bid-ask spreads are zero, 
and derive necessary and sufficient conditions for the existence of arbitrage free models.
Essentially, the given call prices must not admit calendar or butterfly arbitrage.
Davis and Hobson~\cite{DaHo07} include interest rates and dividends and give similar results. 
They also describe explicit arbitrage strategies, whenever arbitrage exists.
Concurrent related work has been done by Buehler~\cite{Bu06}.
Going beyond existence, Carr and Cousot~\cite{CaCo12} present practically appealing explicit constructions
of calibrated martingales.
More recently,
Tavin~\cite{Ta15} considers options on multiple assets and studies the existence of arbitrage strategies in this setting. Spoida~\cite{Sp14} gives conditions for the consistency of a set of prices
that contains not only vanillas, but also digital barrier options. See~\cite{HeObSpTo16}
for many related references.

As with virtually any result in mathematical finance, robustness with respect to market frictions
is an important issue in assessing the practical appeal of these findings. Somewhat surprisingly,
not much seems to be known about the consistency problem in this direction, the single exception being a paper
by Cousot~\cite{Co07}. He allows positive bid-ask spreads on the options, but not on the underlying,
and finds conditions on the prices that determine the existence of an arbitrage-free
model explaining them.

The novelty of our paper is that we allow a bid-ask spread on the underlying.
Without any further
assumptions on the size of this spread, it turns out that there is no connection between
the quoted price of the underlying and those of the calls: Any strategy trying to
exploit unreasonable prices can be made impossible by a sufficiently large bid-ask spread
on the underlying (see Example~\ref{ex:motivate} and Proposition~\ref{prop:nobound}).
In this respect, the problem
is \emph{not} robust w.r.t.\ the introduction of a spread on the underlying. However,
an arbitrarily large spread seems questionable, given that spreads are usually tight
for liquid underlyings. We thus enunciate
that the appropriate question is not ``when are the given prices consistent'',
but rather ``how large a bid-ask spread on the underlying is needed to explain them?''
Therefore, we put a bound $\epsilon\geq0$ on the spread of the discounted prices, and want to determine the
values of $\epsilon$ that lead to a model explaining the given prices.
We then refer to the call prices as $\epsilon$-consistent (with the absence of arbitrage).
To define the payoff of the call options, we use an arbitrary reference price
process that evolves within the bid-ask spread. We show (Proposition~\ref{prop:ar})
that the consistency problem does not change dramatically if this reference process
is the arithmetic average of the bid and ask prices of the underlying.

Recall that the main technical tool used in the papers~\cite{CaMa05,Co07,DaHo07} mentioned above
to construct arbitrage-free models
is Strassen's theorem~\cite{St65}, or modifications thereof.
In the financial context, this theorem shows the existence of martingale models for
option prices that increase with maturity.
The latter property breaks down if a spread on the underlying is allowed.
We will therefore employ some results from our recent companion paper~\cite{GeGu18},
which deals with variants of Strassen's theorem and approximating sequences
of measures by peacocks (processes increasing w.r.t.\ the convex order).

We assume discrete trading times and finite probability spaces throughout;
no gain in tractability or realism is to be expected by not doing so.
In the case of a single maturity, we obtain simple explicit conditions
that are equivalent to $\epsilon$-consistency (Theorem~\ref{thm:single}).
The multi-period problem, on the other hand, seems to be
challenging. We provide two partial results: necessary (but presumably not sufficient) explicit
conditions for $\epsilon$-consistency (Theorem~\ref{thm:NACVB}), and sufficient semi-explicit conditions (Theorem~\ref{thm:main}). Here, by ``semi-explicit'' we mean the following: Our consistency definition
requires the existence of two sequences of measures, which are not ``too far apart'',
and one of which is a peacock. They correspond to a consistent price system resp.\ to a reference price
that defines the option payoffs. Our result does not say anything about the existence
of the reference price process, but contains explicit conditions for the existence of the peacock.

The structure of the paper is as follows.
In Section~\ref{sec:notation} we  describe our setting and give a precise formulation of our problem.
Also, the significance of peacocks and approximating sequences of measures is explained.
Then, in Section~\ref{sec:sing} we  present
necessary and sufficient conditions for the existence of arbitrage 
free models with bounded bid-ask spreads for a single maturity. 
Our main results on the multi-period problem are contained in Section~\ref{sec:mult eq}.
There, we invoke the main result from~\cite{GeGu18}.
Necessary (but more explicit) conditions for multiple maturities are found in Section~\ref{sec:mult nec}.
Section~\ref{sec:conc} concludes.

\section{The consistency problem under bid-ask spreads}\label{sec:notation}

Our time index set will be  $\mathcal{T}=\{0, \dots, T\}$, where $1\leq T\in\mathbb{N}$,
and~$0$ means today.
By a slight abuse of terminology, we will call the integers in $\mathcal{T}$
``maturities'' and not ``indices of maturities''. We write $\mathcal{T}^*=\{1, \dots, T\}$
for the set of positive times in $\mathcal{T}$.
Whenever we talk about ``the given prices'' or similarly, 
we mean the following data:
\begin{align}
  &\text{A positive deterministic bank account}\
     (B(t))_{t \in \mathcal{T}}\ \text{with}\ B(0)=1, \label{eq:data1} \\
  &\text{strikes}\quad 0< K_{t,1} < K_{t,2}< \dots <K_{t,N_t}, \quad N_t\geq1,\ t\in\mathcal{T}^*, \label{eq:data2}\\
  & \text{corresponding call option bid and ask prices (at time zero)} \notag \\
  & \qquad 0<\bidy_{t,i}  \quad \text{resp.} \quad 0<\asky_{t,i}, \quad 1\leq i\leq N_t,\ t\in\mathcal{T}^*, \label{eq:data3}\\
  & \text{and the current bid and ask price of the underlying} \quad 0<\bids_0 \leq \asks_0.
  \label{eq:data4}
\end{align}
We write $D(t)=B(t)^{-1}$ for the time zero price of a zero-coupon bond maturing at~$t$,
and $k_{t,i}=D(t)K_{t,i}$ for the discounted strikes. The symbol $C_t(K)$ denotes
a call option with maturity~$t$ and strike~$K$.

In the presence of a bid-ask spread on the underlying,
it is not obvious how to define the payoff of an option; this issue seems to have been
somewhat neglected in the transaction costs literature. Indeed, suppose that an agent
holds a call option with strike $\$100$, and that at maturity $T=1$ bid and ask are $\bids_1=\$99$ resp.\
$\asks_1=\$101$. Then, the agent might wish to exercise the option to obtain a security 
for $\$99$ instead of $\$100$,
or he may forfeit the option on the grounds that spending $\$100$
would earn him a position whose liquidation value is only $\$99$. The exercise decision
cannot be nailed down without making further assumptions.
In practice, the quoted ticker price of the underlying is the last price at which an actual transaction
has occurred. This price then triggers cash-settled options. However, this approach
is not feasible in our setup, which does not include an order book.

In the literature on option pricing under transaction costs,
it is usually assumed that the bid and ask of the underlying are {constant multiples
of a mid-price (often assumed to be geometric Brownian motion). This mid-price
is then used as trigger to decide whether an option should be exercised, followed by
physical delivery \cite{Bi14,DaPaZa93,WhWi97}. The assumption that such a constant-proportion
mid-price triggers exercise
seems to be rather ad-hoc, though.
To deal with this problem in a parsimonious way, we assume that call options are cash-settled,
using a reference price process~$S^C$. This process
evolves within the bid-ask spread. It is not a traded asset by itself, but just
serves to fix the call option payoff $(S_{t}^C -K)^+$ for strike~$K$ and maturity~$t$.
This payoff is immediately transferred to the bank account without any costs.
\begin{definition}\label{def:model}
  A \emph{model} consists of a finite probability space $(\Omega, \mathcal{F}, \mathbb{P})$ with a discrete
    filtration $(\mathcal{F}_t)_{t\in\mathcal{T}}$ and three adapted stochastic processes
    $\bids$, $\asks$, and $S^C$, satisfying\footnote{Equations
    and inequalities among random variables are always understood to hold
    almost surely.}
    \begin{equation} \label{eq:scbound}
      0<\bids_t \leq S_t^C \leq \asks_t, \quad t\in\mathcal{T}^* .
    \end{equation}
\end{definition}
Clearly, $\bids_t$ and $\asks_t$ denote the bid resp.\ ask price
of the underlying at time~$t$.
Note that, in our terminology, the initial bid and ask are part of the given
prices (see~\eqref{eq:data4}), and thus the processes in Definition~\ref{def:model} are indexed
by $\mathcal{T}^*=\{1, \dots, T\}$ and not by $\mathcal{T}=\{0, \dots, T\}$.

As for the reference price process~$S^C$, we
do not insist on a specific definition (such as, e.g., $S^C=\tfrac12(\bids+\asks)$), but allow
\emph{any} adapted process inside the bid-ask spread.
We now give a definition for consistency of option prices, allowing for
(arbitrarily large) bid-ask
spreads on both the underlying and the options.
\begin{definition}\label{def:cons}
    The prices~\eqref{eq:data1}--\eqref{eq:data4} are \emph{consistent with the absence of arbitrage,}
    if there is a model (in the sense of Definition~\ref{def:model}) such that
    \begin{itemize}
      \item $\mathbb{E}[(D(t)S_t^C-k_{t,i})^+] \in [\bidy_{t,i}, \asky_{t,i}], \quad
         1\leq i\leq N_t, \ t\in\mathcal{T}^*$,
      \item 
      There is a process $(S^*)_{t\in\mathcal T}$ such that $\bids_t\leq S_t^* \leq \asks_t$ for $t \in \mathcal{T}$ 
    and such that $(D(t)S_t^*)_{t \in \mathcal{T}}$ is a
    $\mathbb{P}$-martingale\footnote{Note that we do not mention the physical probability
    measure, as it is of no relevance to our study.}
    w.r.t.\ the filtration $(\mathcal{F}_t)_{t\in\mathcal{T}}$.
    The pair $(S^*,\mathbb P)$ is called a consistent price system.
    \end{itemize}
\end{definition}
The process~$S^*$ is also called a shadow price.
According to Kabanov and Stricker~\cite{KaSt01} (see also~\cite{Sc04}),
these requirements yield an arbitrage free model comprising
bid and ask price processes for the underlying and each call option. Indeed, for the call
with maturity~$t$ and strike $K_{t,i}$, one may take $\big(\bidy_{t,i}\mathbf{1}_{\{s=0\}}
+B(s)\mathbb{E}[(D(t)S_t^C-k_{t,i})^+|\mathcal{F}_s] \mathbf{1}_{\{s>0\}}\big)_{s\in\mathcal{T}}$
as bid price process (and similarly for the ask price), and
$\big(B(s)\mathbb{E}[(D(t)S_t^C-k_{t,i})^+|\mathcal{F}_s]\big)_{s\in\mathcal{T}}$ as
the process in the second part of Definition~\ref{def:cons}.
We recommend Section~1 of Schachermayer's recent book~\cite{Sc17} as an accessible
introduction to the FTAP under proportional transaction costs.

As mentioned in the introduction, if consistency is defined according to Definition~\ref{def:cons},
then there is no interplay between the current prices of the underlying and the options,
which seems to make little sense. As an illustration,
the following two-period example shows how frictionless arbitrage strategies may fail in the presence
of a sufficiently large spread;
a general result is given in Proposition~\ref{prop:nobound} below.

\begin{exa}\label{ex:motivate}
   Let $c>0$ be arbitrary. We set $k:=k_{1,1}=k_{2,1}=1$ and assume 
   \[ 
     B(1)=B(2)=1, \quad \bids_0=\asks_0=2, \quad r_1:=\bidy_{1,1}=\asky_{1,1} =c+ 1, \quad r_2:=\bidy_{2,1}=\asky_{2,1}= 1.
   \]
   Thus $C_1(k)$ is ``too expensive'', and without frictions, 
   buying  $C_2(k)-C_1(k)$ would be an arbitrage opportunity (upon selling one unit of stock if $C_1(k)$ expires in the money). 
   In particular, the first condition from Corollary 4.2 in~\cite{DaHo07} and equation (5) in~\cite{Co07} 
   are violated: they both state that $r_1 \leq r_2$ is necessary for the absence of arbitrage strategies.

   But with spreads we can choose $c$ as large as we want and still the above prices would be consistent with no-arbitrage.
   Indeed, we can define a deterministic model as follows:
   \[
    \bids_1=\bids_2=2, \quad \asks_1=2c+2, \quad \asks_2=2, \quad S^C= \frac12 (\bids+\asks).
   \]
   Note that 
   \[
     (S_2^C-k)^+= 1 \quad \text{and} \quad (S_1^C-k)^+= c + 1.
   \]
   This model is free of arbitrage (see Proposition~\ref{prop:nobound} below). 
   In particular, consider the portfolio $C_2(k)-C_1(k)$: the short call $-C_1(k)$ finishes in the money with payoff $-(c+1)$.
   This cannot be compensated by going short in the stock, because its bid price stays at~2. 
   The payoff at time $t=2$ of this strategy, with shorting the stock at time~$t=1$, is
   \[
     (S_2^C-k)^+-(S_1^C-k)^+- (\asks_2-\bids_1)=-c <0.
   \]
\end{exa}
\medskip

Our focus will thus be on a stronger notion of consistency, where the discounted spread
on the underlying is bounded. Hence, our goal becomes to determine how large a spread
is needed to explain given option prices.
\begin{definition}\label{def:eps}
  Let $\epsilon\geq0$.  Then the prices~\eqref{eq:data1}--\eqref{eq:data4}  are
  \emph{$\epsilon$-consistent with the absence of arbitrage,} or simply  \emph{$\epsilon$-consistent,}
  if they are consistent (Definition~\ref{def:cons}) and the following conditions hold,
  \begin{align}
    \asks_t - \bids_t &\leq \epsilon B(t), \quad t\in\mathcal{T}, \label{eq:bound eps} \\
    S_t^C &\geq \epsilon B(t),  \quad t\in\mathcal{T}^*. \label{eq:bd SC} 
  \end{align}
\end{definition}
The bound~\eqref{eq:bd SC} is an additional mild assumption on the reference
price~$S^C$, made for tractability, and makes sense given
the actual size of market prices and spreads (recall that $\bids\leq S^C$).
With the same justification, in our main results on $\epsilon$-consistency we will assume that all
discounted strikes~$k_{t,i}$ are larger than~$\epsilon$.
If $\epsilon=0$ and the bid and ask prices in~\eqref{eq:data3}
and~\eqref{eq:data4} agree, then we recover the frictionless
consistency definition from~\cite{DaHo07}.

As mentioned above, we do not insist on any specific definition of the reference price~$S^C$.
However, it is not hard to show that choosing $S^C=\tfrac12(\bids+\asks)$ yields
almost the same notion of $\epsilon$-consistency.

\begin{prop}\label{prop:ar}
   Let $\epsilon \geq 0$ and assume that we are interested in arbitrage free models where, 
   in addition to the requirements of Definition~\ref{def:eps},
   we have that
   \begin{align} \label{eq:midprice}
      S_t^C= \frac{\bids_t+\asks_t}2, \quad  t \in \mathcal{T}^*.
   \end{align}
   Let us then call the prices~\eqref{eq:data1}-\eqref{eq:data4} \emph{arithmetically
   $\epsilon$-consistent}.
   For $\epsilon \geq 0$, the prices are arithmetically $2\epsilon$-consistent if and only
   if they are $\epsilon$-consistent.
\end{prop}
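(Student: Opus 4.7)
The plan is to prove each implication by a direct transformation of a witness model on the same probability space, keeping the reference price process fixed so that the expected call prices---and hence the call bid-ask constraints---are preserved automatically.

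For ``arithmetically $2\epsilon$-consistent $\Rightarrow$ $\epsilon$-consistent'', I would start from a witness $(\bids,\asks,S^C,S^*)$ with $S^C=(\bids+\asks)/2$. The decisive observation is that midpointness forces every element of $[\bids,\asks]$---in particular the shadow price $S^*$---to lie within $\tfrac12(\asks-\bids)\leq\epsilon B(t)$ of $S^C$. Hence I would shrink the spread by setting
\[
  \widetilde{\bids}_t:=\min(S^C_t,S^*_t), \qquad \widetilde{\asks}_t:=\max(S^C_t,S^*_t),
\]
and keep $\widetilde{S}^C:=S^C$, $\widetilde{S}^*:=S^*$. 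Routine checks then yield $\widetilde{\asks}-\widetilde{\bids}=|S^*-S^C|\leq\epsilon B$, $\widetilde{S}^C\geq 2\epsilon B\geq\epsilon B$, and $\widetilde{\bids}\geq\bids\geq\epsilon B>0$; the last inequality holds because arithmetic $2\epsilon$-consistency forces $\bids=S^C-\tfrac12(\asks-\bids)\geq 2\epsilon B-\epsilon B$.

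For the converse ``$\epsilon$-consistent $\Rightarrow$ arithmetically $2\epsilon$-consistent'', starting from an $\epsilon$-consistent witness $(\bids,\asks,S^C,S^*)$, I would enlarge the spread symmetrically around $S^C$ by setting
\[
  \widetilde{\bids}_t:=S^C_t-\epsilon B(t), \qquad \widetilde{\asks}_t:=S^C_t+\epsilon B(t),
\]
and keeping $\widetilde{S}^C:=S^C$, $\widetilde{S}^*:=S^*$. The midpoint identity $\widetilde{S}^C=(\widetilde{\bids}+\widetilde{\asks})/2$ and the spread bound $\widetilde{\asks}-\widetilde{\bids}=2\epsilon B$ are immediate. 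The shadow price stays inside the new spread because $|S^*-S^C|\leq\asks-\bids\leq\epsilon B$, and its discounted-martingale property as well as the call bid-ask bounds carry over unchanged.

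The main obstacle will be to verify the two remaining arithmetic $2\epsilon$ bounds $\widetilde{S}^C\geq 2\epsilon B$ and $\widetilde{\bids}>0$, since $\epsilon$-consistency only provides the weaker pointwise bound $S^C\geq\epsilon B$. I would close this gap by invoking the paper's standing assumption that the discounted strikes exceed the spread parameter, which in the arithmetic $2\epsilon$ regime reads $k_{t,i}>2\epsilon$: then $(D(t)S^C_t-k_{t,i})^+$ vanishes on $\{D(t)S^C_t<2\epsilon\}$, so replacing $S^C$ pointwise by $\max(S^C,2\epsilon B)$ leaves every expected call payoff intact while securing $\widetilde{S}^C\geq 2\epsilon B$ and $\widetilde{\bids}\geq\epsilon B>0$. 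A short additional argument---e.g.\ a perturbation of the witness, or an explicit realignment of the shadow price on the exceptional event $\{S^C<2\epsilon B\}$---is needed to confirm that a valid consistent price system still exists for the enlarged spread.
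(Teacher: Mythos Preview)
Your construction is essentially identical to the paper's: in one direction you shrink the spread to $[\min(S^C,S^*),\max(S^C,S^*)]$, in the other you enlarge it to $[S^C-\epsilon B,\,S^C+\epsilon B]$, keeping both $S^C$ and $S^*$ fixed so that the call-price constraints and the martingale property carry over automatically. This is exactly what the paper does.

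You are in fact more careful than the paper on one point. In the direction ``$\epsilon$-consistent $\Rightarrow$ arithmetically $2\epsilon$-consistent'' you correctly observe that the lower bound~\eqref{eq:bd SC} with parameter~$2\epsilon$ requires $S^C_t\geq 2\epsilon B(t)$ (and hence $\widetilde{\bids}_t>0$), whereas $\epsilon$-consistency only yields $S^C_t\geq\epsilon B(t)$. The paper's own proof does not address this; it simply declares the model arithmetically $2\epsilon$-consistent after defining the new bid and ask. Your proposed patch---replacing $S^C$ by $\max(S^C,2\epsilon B)$ under the standing hypothesis $k_{t,i}>2\epsilon$---does preserve the expected call payoffs, but as you yourself note it may push $S^*$ outside the new spread on $\{S^C<2\epsilon B\}$, and any pointwise realignment of $S^*$ there risks destroying the martingale property. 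A fully rigorous repair is not immediate. The paper treats~\eqref{eq:bd SC} as a mild technical convenience (see the remarks following Definition~\ref{def:eps}), and its proof of this proposition should be read in that spirit; your identification of the gap is accurate, but the paper does not close it either.
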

\begin{proof}
   First, assume that there exists an arithmetically $2\epsilon$-consistent model with corresponding stochastic processes $\bids_t, \asks_t, S_t^C, S_t^*$.
   We define new bid and ask prices $\bids_t':= S_t^C \wedge S_t^*$ and $\asks_t':= S_t^C \vee S_t^*$.
   Then~\eqref{eq:midprice} implies that $\asks_t'-\bids_t' \leq B(t) \epsilon$.
   Therefore, the model consisting of $\bids_t', \asks_t', S_t^C, S_t^*$ is $\epsilon$-consistent.
   Conversely, assume that the given prices are $\epsilon$-consistent. 
   Then there exist processes $S^C$ and $S^*$ 
   on a probability space $(\Omega,\mathcal{F},\pp)$ such that $|S_t^C-S_t^*|\leq B(t)\epsilon$ a.s.
   We then simply set $\bids_t=S_t^C-B(t)\epsilon$ and $\asks_t=S_t^C+B(t)\epsilon$, and have thus constructed an arithmetically $2\epsilon$-consistent model.
\end{proof}
Note that the statement of Proposition~\ref{prop:ar}
does not hold for consistency (instead of $\epsilon$-consistency), nor does it
hold if we replace~\eqref{eq:midprice} with
      \begin{align*} \label{eq:midprice_p}
      S_t^C= p \bids_t+(1-p)\asks_t, \quad  t \in \mathcal{T}^*,
   \end{align*}
   where  $p\in [0,1]$ and $p \neq\frac 12$.

The process $(D(t)S_t^{C})_{t \in \mathcal{T}}$ does not have to be a martingale, 
as~$S^C$ is not traded on the market. 
The option prices give us some information about the marginals of the process~$S^C$, though.
On the other hand, the process $(D(t)S_t^{*})_{t \in \mathcal{T}}$
has to be a martingale, but we have no information about its marginals,
except that $|S^*_t-S_t^C| \leq \epsilon B(t)$.
This implies
\begin{equation} \label{Winfappl}
   W^\infty\Bigl( \mathcal{L}\bigl(D(t)S_t^C\bigr),\mathcal{L}\bigl(D(t)S_t^*\bigr) \Bigr)
    \leq \epsilon, \quad t\in\mathcal{T}^*,
\end{equation}
where $W^\infty$ denotes the infinity Wasserstein distance, and $\mathcal{L}$
the law of a random variable. The distance $W^\infty$ is defined on~$\mathcal{M}$,
the set of probability measures on~$\setR$ with finite mean, by
\begin{equation*}
W^\infty (\mu,\nu) = \inf \left\|X-Y \right\|_\infty, \quad \mu,\nu \in\setM.
\end{equation*} 
The infimum is taken over all probability spaces $(\Omega,\mathcal{F}, \pp)$ 
and random pairs $(X,Y)$ with marginals $(\mu,\nu)$.
See~\cite{GeGu18} for some references on~$W^\infty$.
For $\epsilon\geq0$
and random variables~$X$ and~$Y$, the condition $W^\infty(\mathcal{L}X,\mathcal{L}Y)\leq \epsilon$
is equivalent
to the existence of a probability space with random variables $X'\sim \mathcal{L}X,$
$Y'\sim \mathcal{L}Y$ such that $|X'-Y'|\leq \epsilon$ a.s. (This is another result due
to Strassen, see Proposition~\ref{prop:DudleyStrassen} below.)

\begin{definition}\label{def:conv}
Let $\mu, \nu$ be two measures in $\setM$. Then we say that $\mu$ is smaller in \emph{convex order} than $\nu$, in symbols $\mu \leqc \nu$, if for every convex function $\phi: \setR \rightarrow \setR$ we have that $\int \phi \; d \mu \leq \int \phi \; d \nu$, as long as both integrals are well-defined.
A family of measures $(\mu_t)_{t \in \mathcal{T}^*}$ in $\setM$ is called a \emph{peacock}, if $\mu_s \leqc \mu_t$ for all $s \leq t$ in~$\mathcal{T}^*$ (see Definition~1.3 in~\cite{HiPrRoYo11}).
\end{definition}
For $\mu \in \setM$ and $x \in \setR$ we define
\begin{equation}\label{eq:cf}
  R_\mu(x)= \int\nolimits_\setR (y-x)^+  \mu(dy),
\end{equation}
the call function of~$\mu$.
The mean of a measure~$\mu$ will be denoted by $\mathbb{E} \mu = \int y\, \mu(dy)$.
These notions are useful for constructing models for $\epsilon$-consistent prices,
as made explicit by the following lemma. As is evident from its proof,
the sequence~$(\mu_t)$ consists of the marginals of a (discounted) reference price,
whereas~$(\nu_t)$ gives the marginals of a martingale within the bid-ask spread.
The proof uses a coupling result from our companion paper (Lemma~9.1 in~\cite{GeGu18}).
\begin{lem} \label{lem:overview}
   For $\epsilon \geq 0$ the prices \eqref{eq:data1}--\eqref{eq:data4}
   are $\epsilon$-consistent with the absence of arbitrage, 
   if and only if $\asks_0-\bids_0 \leq \epsilon$ and there are sequences of 
   finitely supported measures $(\mu_t)_{t \in \mathcal{T}^*}$
   and $(\nu_t)_{t \in \mathcal{T}^*}$ in $\setM$ such that:
   \begin{enumerate}
     \item $R_{\mu_t}(k_{t,i}) \in [\bidy_{t,i}, \asky_{t,i}]$ for all $t \in \mathcal{T}^*$ and $i \in \{1,\dots, N_t\}$, and $\mu_t([\epsilon,\infty))=1$ for $t\in\mathcal T^*$,
     \item $(\nu_t)_{t \in \mathcal{T}^*}$ is a peacock and its mean satisfies $\mathbb{E}\nu_T \in [\bids_0,\asks_0]$, and
     \item $W^\infty(\mu_t, \nu_t) \leq \epsilon$ for all $t \in \mathcal{T}^*$.
   \end{enumerate}
\end{lem}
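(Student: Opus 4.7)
The plan is to construct a model explicitly, using $(\nu_t)$ to produce a shadow price $S^*$ and $(\mu_t)$ to produce the reference price $S^C$, coupled so that $|S^C_t-S^*_t|\le\epsilon B(t)$. I would then define the bid and ask of the underlying as the running minimum and maximum of $S^C$ and $S^*$.

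First, since $(\nu_t)_{t\in\mathcal{T}}$ is a peacock with $\mathbb{E}\nu\in[\bids_0,\asks_0]$, I would invoke Kellerer's theorem (see~\cite{HiPrRoYo11}) to obtain a (Markov) martingale $(M_t)_{t\in\mathcal{T}}$ on some filtered probability space $(\Omega',\mathcal{G},(\mathcal{G}_t)_{t\in\mathcal{T}},\mathbb{P}')$ such that $\mathcal{L}(M_t)=\nu_t$; the deterministic initial value is $M_0=\mathbb{E}\nu_0=\mathbb{E}\nu\in[\bids_0,\asks_0]$. The process $M$ will play the role of the discounted consistent price system $D(t)S^*_t$.

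Next, condition~(iii) and the equivalence between $W^\infty(\mathcal{L}X,\mathcal{L}Y)\le\epsilon$ and the existence of a coupling with $|X-Y|\le\epsilon$ recalled just before~\eqref{Winfappl} produce, for every $t$, a measurable map $f_t:\mathbb{R}\times[0,1]\to\mathbb{R}$ such that $X_t:=f_t(M_t,U_t)$ has law $\mu_t$ and $|X_t-M_t|\le\epsilon$ a.s., when $U_t$ is an independent uniform random variable. I would enlarge $\Omega'$ by adjoining a sequence $(U_t)_{t\in\mathcal{T}}$ of i.i.d.\ uniforms, jointly independent of $\mathcal{G}_\infty$, and set $\mathcal{F}_t:=\mathcal{G}_t\vee\sigma(U_u:u\le t)$. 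Because the $U_u$ are independent of $\mathcal{G}_\infty$, the process $M$ remains a martingale in the filtration $(\mathcal{F}_t)$. Then I define
\[
  S^*_t:=B(t)M_t,\qquad S^C_t:=B(t)X_t,\qquad \bids_t:=S^C_t\wedge S^*_t,\qquad \asks_t:=S^C_t\vee S^*_t,
\]
for $t\ge 1$, while at $t=0$ we keep the data $\bids_0,\asks_0$ and pick any $S^C_0\in[\bids_0,\asks_0]$ with the appropriate marginal.

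Finally, I would verify the requirements one by one. Adaptedness of $\bids,\asks,S^C$ and the martingale property of $D(t)S^*_t=M_t$ are immediate. The bound $|S^C_t-S^*_t|\le\epsilon B(t)$ yields $\asks_t-\bids_t\le\epsilon B(t)$, which is~\eqref{eq:bound eps}; condition~(i) gives $X_t\ge\epsilon$ a.s., hence $S^C_t\ge\epsilon B(t)$, which is~\eqref{eq:bd SC}; the sandwich $\bids_t\le S^C_t,S^*_t\le\asks_t$ delivers both~\eqref{eq:scbound} and the shadow-price inclusion in Definition~\ref{def:cons}; and the option-price identity
\[
  \mathbb{E}[(D(t)S^C_t-k_{t,i})^+]=\mathbb{E}[(X_t-k_{t,i})^+]=R_{\mu_t}(k_{t,i})\in[\bidy_{t,i},\asky_{t,i}]
\]
follows from~\eqref{eq:cf} and condition~(i). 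The main obstacle I anticipate is the bookkeeping around the filtration: we need $M$ to stay a martingale after we adjoin the auxiliary randomness used to build $X_t$ from $M_t$, which is precisely why the $U_t$ must be taken independent of the whole $\sigma$-algebra $\mathcal{G}_\infty$ rather than just of $\mathcal{G}_t$. A secondary (but largely cosmetic) point is the standing finiteness assumption on $\Omega$ from Definition~\ref{def:model}; if needed, one reduces to that case by discretising $\mu_t$ and $\nu_t$ on a common finite grid while preserving the three properties.
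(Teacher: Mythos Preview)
Your proposal is correct and follows essentially the same route as the paper: obtain a martingale with marginals $(\nu_t)$, couple it to a process with marginals $(\mu_t)$ within $\epsilon$, and set $\bids,\asks$ as the pointwise min/max of $S^C$ and $S^*$. If anything, you are more careful than the paper, which simply asserts the existence of an adapted $S^C$ with the right marginals and the $\epsilon$-bound without spelling out the coupling or the filtration enlargement; your use of auxiliary uniforms independent of $\mathcal{G}_\infty$ makes that step rigorous, and your remark on reducing to a finite $\Omega$ addresses a point the paper glosses over.
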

\begin{proof}
   Let $(\mu_t)_{t \in \mathcal{T}^*}$ and $(\nu_t)_{t \in \mathcal{T}^*}$ be as above.
   Recall that Strassen's theorem (Theorem~8 in~\cite{St65}) asserts that any peacock
   is the sequence of marginals of a martingale.
   Therefore, there is a finite filtered probability space 
   with a martingale $(\widetilde{S}_t)_{t \in \mathcal{T}}$ such that $\nu_t$ is the law of $\widetilde{S}_t$ for  $t\in\mathcal{T}^*$.

   From~(iii), and the remark before Definition~\ref{def:conv}, it follows  that there is a
   probability space with processes~$\hat M$ and~$\hat{S}^C$ such that $\hat{M}_t \sim \nu_t,$
   $D(t)\hat{S}^C_t \sim \mu_t,$ and $|\hat{M}_t - D(t)\hat{S}^C_t|\leq \epsilon$ for $t\in \mathcal T^*$.
   As in the the proof of Theorem~9.2 in~\cite{GeGu18}, it is easy to see that the finite support 
   condition implies that there is a \emph{finite} probability space with these properties.
   The sufficiency statement now easily follows from Lemma~9.1 in~\cite{GeGu18}. Indeed, that lemma yields
   a finite filtered probability space with adapted processes $(\check S_t)_{t\in\mathcal T}$
   and~$(S_t^C)_{t\in\mathcal T^*}$ satisfying
   \begin{itemize}
     \item $\check S$ is a martingale,
     \item $\check{S}_t\sim \nu_t$ and $D(t)S_t^C \sim \mu_t$ for $t\in \mathcal T^*,$
     \item $|\check{S}_t - D(t) S_t^C| \leq \epsilon$ for $t\in \mathcal T^*.$
   \end{itemize}
   It then suffices to define
   \[ 
     S_t^*:=B(t)\check{S}_t, \quad \bids_t:=S_t^C\wedge S_t^*, \quad \asks_t:=S_t^C\vee S_t^*,
     \quad t\in\mathcal{T}^*,
   \]
   to obtain an arbitrage free model. Note that the second assertion in~(ii) ensures that
   $\bids_t\leq S_t^* \leq \asks_t$ holds for $t \in \mathcal{T}$ and not just $\mathcal T^*$.
   
   Conversely, assume now that the given prices are $\epsilon$-consistent. For $t\in\mathcal T^*$,
   define~$\mu_t$ as the law of $D(t)S_t^C,$ and $\nu_t$ as the law of~$S_t^*$.
   It is then very easy to see that the stated conditions are satisfied. As for the finite support condition,
   note that the probability space in Definition~\ref{def:model} is finite.
\end{proof}
To prepare for the central notions of model-independent and weak arbitrage,
we now define semi-static trading strategies in the bank account, the underlying asset,
and the call options. Here, semi-static means that the position in the call
options is fixed at time zero. The definition is model-independent; as soon
as a model (in the sense of Definition~\ref{def:model}) is chosen, the number of
risky shares in the $t$-the trading period, e.g., becomes
\begin{equation}\label{eq:shares}
  \phi_t^1\big((\bids_u)_{1\leq u< t} , (S_u^C)_{1\leq u< t} , (\asks_u)_{1\leq u< t}\big),
  \quad t\in\mathcal T^*.
\end{equation}
\begin{definition}\label{def:semistatic}
  \begin{enumerate}
   \item
  A \emph{semi-static portfolio}, or \emph{semi-static trading strategy}, is a triple
  \[
    \Phi=\Bigl((\phi^0_t)_{t \in \mathcal{T}^*}, (\phi^1_t)_{t \in \mathcal{T}^*}, (\phi^{t,i})_{t \in \mathcal{T}^*,\, i \in \{1, \dots, N_t\}}\Bigr),
  \]
  where $\phi^0_1\in\mathbb{R}$, $\phi^0_t:(0,\infty)^{3t}\to\mathbb{R}$ are Borel measurable
  for $t\in\mathcal T^*$, analogously for $\phi^1$,
  and $\phi^{t,i}\in\mathbb{R}$ for $t \in \mathcal{T}^*, i \in \{1, \dots, N_t\}$.
  Here, $\phi^0_t$ denotes the investment in the bank account, 
  $\phi^1_t$ denotes the number of stocks held in the period from $t-1$ to~$t$,
  and $\phi^{t,i}\in \setR$ is the number of options with maturity $t \in \mathcal{T}^*$ and strike $K_{t,i}$ which the investor buys at time zero.  
  \item A semi-static portfolio is called \emph{self-financing,} if
  \begin{multline*}
    \phi_{t+1}^0(\boldsymbol{s}_{t}) = \frac{B(t+1)}{B(t)} \phi_{t}^0(\boldsymbol{s}_{t-1})
    + \sum_{i=1}^{N_{t}} \phi^{{t},i}(s_{t}^C-K_{{t},i})^+ \\
    -\big(\phi^1_{t+1}(\boldsymbol{s}_{t})-\phi^1_{t}(\boldsymbol{s}_{t-1})\big)^+ \overline{s}_{t}
    +\big(\phi^1_{t+1}(\boldsymbol{s}_{t})-\phi^1_{t}(\boldsymbol{s}_{t-1})\big)^- \underline{s}_{t}
  \end{multline*}
  holds for $1\leq t< T$ and $\underline{s}_u,s_u^C,\overline{s}_u\in(0,\infty)$,
  $1\leq u\leq t$, where
  \begin{equation}\label{eq:bold s}
    \boldsymbol{s}_t := \big( (\underline{s}_u)_{1\leq u\leq t},(s^C_u)_{1\leq u\leq t},
    (\overline{s}_u)_{1\leq u\leq t} \big).
  \end{equation}
  \item For prices~\eqref{eq:data1}--\eqref{eq:data4}, the \emph{initial portfolio value}
  of a semi-static portfolio~$\Phi$ is given by
  \[
    r_{\Phi}:=\phi_{1}^0 + (\phi_1^{1})^+\asks_0- (\phi_1^{1})^-\bids_0 + \sum_{t\in\mathcal T^*}
    \sum_{i=1}^{N_t} \bigl( (\phi^{t,i})^+\asky_{t,i}- (\phi^{t,i})^-\bidy_{t,i} \bigr).
   \]
   This is the cost of setting up the portfolio~$\Phi$.
   \item The \emph{liquidation value} at time~$T$ is defined as
    \begin{equation*}
      L_\Phi(\boldsymbol{s}_{T}) := \frac{B(T)}{B(T-1)}\phi^0_{T}(\boldsymbol{s}_{T-1})
        + \sum_{i=1}^{N_{T}} \phi^{{T},i}(s_{T}^C-K_{{T},i})^+ 
        -\big(\phi^1_{T}(\boldsymbol{s}_{T-1})\big)^- \overline{s}_{T}
    +\big(\phi^1_{T}(\boldsymbol{s}_{T-1})\big)^+ \underline{s}_{T}.
    \end{equation*}
  \end{enumerate}  
\end{definition}
Having defined semi-static portfolios, we can now formulate two useful notions of arbitrage. 
\begin{definition}\label{def:mNA}
  Let $\epsilon \geq 0$.
    The prices \eqref{eq:data1}--\eqref{eq:data4} admit
    \emph{model-independent arbitrage with respect to spread-bound~$\epsilon$}, if we can form a 
    self-financing semi-static portfolio~$\Phi$
    in the bank account, the underlying asset and the options, such that the initial
    portfolio value~$r_\Phi$ is negative 
    and the following holds: For all real numbers $\underline{s}_t,s_t^C,\overline{s}_t\in(0,\infty)$,
    $1\leq t\leq T$, that satisfy
    \begin{align*}
      0 < \underline{s}_t \leq s_t^C \leq \overline{s}_t,& \quad t \in \mathcal{T}^*, \\
      \overline{s}_t - \underline{s}_t \leq \epsilon B(t),& \quad t \in \mathcal{T}^*, \\
    s_t^C \geq \epsilon B(t), & \quad t \in \mathcal{T}^*,
    \end{align*}
    (cf.~\eqref{eq:scbound}, \eqref{eq:bound eps}, and~\eqref{eq:bd SC}),
    we have $L_\Phi(\boldsymbol{s}_{T})\geq0$.
\end{definition}
\begin{definition}\label{def:wNA}
  Let $\epsilon \geq 0$.
  The prices \eqref{eq:data1}--\eqref{eq:data4} admit
   a \emph{weak arbitrage opportunity with respect to spread-bound~$\epsilon$} 
    if there is no model-independent arbitrage strategy (with respect to spread-bound $\epsilon$),
    but for any model satisfying~\eqref{eq:bound eps} and~\eqref{eq:bd SC}, 
    there is a semi-static portfolio~$\Phi$ such that the initial portfolio value~$r_\Phi$ is non-positive,
    \[
      L_\Phi( (\bids_u)_{1\leq u\leq T} , (S_u^C)_{1\leq u\leq T} , (\asks_u)_{1\leq u\leq T}) \geq0,
    \] and
   \[
      \pp\big(L_\Phi( (\bids_u)_{1\leq u\leq T} , (S_u^C)_{1\leq u\leq T} , (\asks_u)_{1\leq u\leq T})>0\big)>0.
    \]
\end{definition}
Most of the time we will fix $\epsilon \geq 0$ and write only \emph{model-independent arbitrage,} meaning model-independent arbitrage with respect to spread-bound $\epsilon$, and similarly for weak arbitrage.
The notion of weak (i.e., model-dependent) arbitrage was first used in~\cite{DaHo07}, where the authors give examples to highlight the distinction between
weak arbitrage and model-independent arbitrage. The crucial difference is that a weak arbitrage
opportunity may depend on the null sets of the model. E.g., suppose that we would like to use two different
arbitrage strategies according to whether a certain call will expire
in the money with positive probability or not.
Such portfolios could serve to exhibit weak arbitrage (Definition~\ref{def:wNA}),
but will not show model-independent arbitrage (Definition~\ref{def:mNA}).

\section{Single maturity: \texorpdfstring{$\epsilon$}{e}-consistency} \label{sec:sing}

In this section, we characterize $\epsilon$-consistency (according to Definition~\ref{def:eps})
in the special case that all option maturities agree.
The consistency conditions for a single maturity are similar to those derived in Theorem~3.1 of~\cite{DaHo07} and Proposition 3 of~\cite{Co07}. 
In addition to the conditions given there, we have to assume that the mean of $S_1^C$ is ``close enough''  to~$S_0$.

We fix $t=1 \in \mathcal{T}$ and often drop the time index for notational convenience, i.e., we write $\asky_{i}$ instead of $\asky_{1,i}$ etc.
In the frictionless case the underlying can be identified with an option with strike $k=0$. 
Here we will do something similar: 
in the formulation of the next theorem we set $k_0=\epsilon$, as if we would introduce an option with strike 
$\epsilon B(1)$, but we think of $C(\epsilon B(1))$ as the underlying.
The choices for $\bidy_0=\bids_0-2\epsilon$ and $\asky_0=\asks_0$ made in Theorem~\ref{thm:single}
 can be motivated as follows: 
in every model which is $\epsilon$-consistent with the absence of arbitrage, \eqref{eq:bd SC} implies 
that the discounted expected payoff of an option with strike $\epsilon B(1)$ has to satisfy 
\[
   D(1) \mathbb{E}[(S_1^C-\epsilon B(1))^+]= D(1)\mathbb{E}[S_1^C] -\epsilon. 
\]
Furthermore, to guarantee the existence of a consistent price system, $D(1)\mathbb{E}[S_1^C]$ has to lie in the closed interval 
$[\bids_0-\epsilon, \asks_0+\epsilon]$, 
which implies that the price of an option with strike $B(1)\epsilon$ has to lie in the interval $[\bids_0-2\epsilon, \asks_0]$.
Therefore, in the proof of Theorem~\ref{thm:single}
(given in the appendix) we will use the symbol~$C_t(\epsilon B(t))$
as a reference to the underlying 
and $-C_t(\epsilon B(t))$ as a reference to a short position in the underlying plus an additional deposit of~$2\epsilon$ in the bank account.

Before we formulate the main result for a single maturity, we recall that a butterfly contract
(with maturity~$1$)
is defined by
\[
  \frac{1}{K_j-K_i} C_1(K_i)-\bigg(\frac{1}{K_j-K_i}+\frac{1}{K_l-K_j} \bigg) C_1(K_j)
  +\frac{1}{K_l-K_j} C_1(K_l),
\]
where $0\leq i<j<l \leq N,$ and that its payoff is non-negative.
A call spread is a portfolio of a long and a short call, where the latter has
 a larger strike.


\begin{thm}\label{thm:single}
   Let $\epsilon\geq0$ and consider prices as at the beginning of Section~\ref{sec:notation},
   with $T=1$ and $k_1> \epsilon$ (see the remarks after~\eqref{eq:bd SC}).
   Moreover, for ease of notation (see the above remarks)
   we set $k_{0}=\epsilon$, $\bidy_{0}=\bids_0-2\epsilon$,
   and $\asky_{0}=\asks_0$.
   Then the prices are $\epsilon$-consistent (see Definition~\ref{def:eps}) if and only
   if the following conditions hold:
   \begin{enumerate}[label=(\roman*)]
   \item \label{it:NAsing1} All butterfly spreads have non-negative time-0 price, i.e.,
     \begin{equation}\label{eq:NAsing1}
        \frac{\asky_{l}-\bidy_j}{k_{l}-k_j} \geq  \frac{\bidy_j-\asky_{i}}{k_{j}-k_{i}},  \quad  
          0\leq i<j<l \leq N. 
     \end{equation}

   \item \label{it:NAsing2} The call prices satisfy
     \begin{equation}\label{eq:NAsing2}
        \frac{\asky_{l}-\bidy_i}{k_{l}-k_i} \geq -1, \quad  0\leq i<l \leq N.
     \end{equation}

   \item \label{it:NAsing3} All call spreads have  non-negative time-0 price, i.e.,
     \begin{equation}\label{eq:NAsing3}
        \bidy_j \leq \asky_{i}, \quad 0\leq i<j \leq N.
     \end{equation} 

   \item  \label{it:NAsing4} If a call spread is available for zero cost, then the involved options
    have zero bid resp.\ ask price, i.e.,
     \begin{equation}\label{eq:NAsing4} 
        \bidy_j=\asky_{i} \ \Rightarrow \ \bidy_j=\asky_{i}=0,  \quad 0\leq i<j\leq N.
     \end{equation} 
   \end{enumerate}
   Moreover, there is a model-independent arbitrage, as soon as any of the conditions
   \ref{it:NAsing1}--\ref{it:NAsing3} is not satisfied. Finally, if \ref{it:NAsing1}--\ref{it:NAsing3} hold
   but~\ref{it:NAsing4} fails, then there is a weak arbitrage opportunity.
\end{thm}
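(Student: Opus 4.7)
My plan is to split the proof into two directions: necessity (violations of (i)-(iii) produce model-independent arbitrages, and (iv) follows directly from $\epsilon$-consistency) and sufficiency (constructing an $\epsilon$-consistent model via Lemma~\ref{lem:overview}). Throughout, I would handle the index $i = 0$ uniformly with the others by exploiting the identification of the fictitious option $C_1(\epsilon B(1))$ with the underlying: since $S_1^C \geq \epsilon B(1)$ by~\eqref{eq:bd SC}, the payoff identity $(S_1^C - \epsilon B(1))^+ = S_1^C - \epsilon B(1)$ holds in every admissible model, so a long position in $C_1(\epsilon B(1))$ at ask $\asky_0 = \asks_0$ is a stock purchase at $\asks_0$, and a short position at bid $\bidy_0 = \bids_0 - 2\epsilon$ amounts to a stock short at $\bids_0$ together with $2\epsilon$ deposited into the bank account.

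For each of (i)-(iii) I would write down the classical frictionless arbitrage with ``buy'' prices replaced by asks and ``sell'' prices by bids. If (i) fails at $0 \leq i < j < l \leq N$, the butterfly with $\alpha_i = (k_l - k_j)/(k_l - k_i)$ long $C(k_i)$, one short $C(k_j)$, and $\alpha_l = (k_j - k_i)/(k_l - k_i)$ long $C(k_l)$ has non-negative payoff in every model by convexity of $x \mapsto (x - k)^+$ in $k$, while the setup cost $\alpha_i \asky_i + \alpha_l \asky_l - \bidy_j$ is negative by assumption. If (ii) fails, selling $C(k_i)$ at bid and buying $C(k_l)$ at ask generates initial cash $\bidy_i - \asky_l > k_l - k_i$, which dominates the worst-case discounted loss $k_l - k_i$ of the spread at maturity. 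If (iii) fails, the bull call spread long $C(k_i)$ / short $C(k_j)$ has non-negative payoff and strictly negative cost. Condition (iv) is not captured by a model-independent arbitrage; instead, I argue directly: in any $\epsilon$-consistent model with reference measure $\mu_1 := \mathcal{L}(D(1) S_1^C)$, the monotonicity chain $\bidy_j \leq R_{\mu_1}(k_j) \leq R_{\mu_1}(k_i) \leq \asky_i$ together with $\bidy_j = \asky_i$ forces $R_{\mu_1}(k_i) = R_{\mu_1}(k_j)$. Rewriting the difference as $\mathbb{E}[(D(1) S_1^C - k_i)^+ - (D(1) S_1^C - k_j)^+]$, a non-negative integrand, yields $D(1) S_1^C \leq k_i$ a.s., hence $R_{\mu_1}(k_i) = 0$ and $\bidy_j = \asky_i = 0$.

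For sufficiency, assuming (i)-(iv), I would invoke Lemma~\ref{lem:overview} with a suitable pair $(\mu_1, \nu_1)$. Concretely, I would construct $\mu_1 \in \setM$ supported on $[\epsilon, \infty)$ with $R_{\mu_1}(k_i) \in [\bidy_i, \asky_i]$ for $i = 0, \ldots, N$; note that the constraint at $i = 0$ is the mean condition $\mathbb{E}\mu_1 \in [\bids_0 - \epsilon, \asks_0 + \epsilon]$ via the identity $R_{\mu_1}(\epsilon) = \mathbb{E}\mu_1 - \epsilon$. Then $\nu_1 := \mathcal{L}(X + c)$ with $X \sim \mu_1$ and a constant $c \in [-\epsilon, \epsilon]$ chosen so that $\mathbb{E}\nu_1 \in [\bids_0, \asks_0]$ automatically satisfies $W^\infty(\mu_1, \nu_1) \leq \epsilon$. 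To build $\mu_1$, I would select prices $r_i^* \in [\bidy_i, \asky_i]$ and define its call function as the piecewise linear function through $(k_i, r_i^*)$, extended with slope $-1$ on $[0, \epsilon]$ and linearly past $k_N$ with the terminal slope until it hits $0$. Conditions (i), (ii), and (iii) allow the $r_i^*$ to be chosen so that the consecutive slopes are non-decreasing and lie in $[-1, 0]$; condition (iv) rules out the only remaining pathology, a plateau at positive height that would either break monotonicity at the next strike or destroy convexity via a jump in slope. The resulting piecewise linear, convex, non-increasing function is the call function of a unique measure $\mu_1$ on $[\epsilon, \infty)$, with atoms at the break points given by the slope increments.

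The delicate step is the inductive construction of the feasible vector $(r_0^*, \ldots, r_N^*)$: at each strike $k_i$ the interval $[\bidy_i, \asky_i]$ must intersect the range of values compatible with convexity, monotonicity, and the slope bound $\geq -1$ given the already-fixed $r_0^*, \ldots, r_{i-1}^*$, and verifying that (i)-(iv) are precisely what keeps this intersection non-empty throughout is the heart of the proof. A secondary but pervasive technical point is the bookkeeping around $C_1(\epsilon B(1))$ whenever the index $0$ enters a strategy; as indicated above, this bookkeeping reduces cleanly to the identity $(S_1^C - \epsilon B(1))^+ = S_1^C - \epsilon B(1)$ guaranteed by~\eqref{eq:bd SC}.
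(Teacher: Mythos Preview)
Your proposal is correct and follows essentially the same architecture as the paper's proof: explicit arbitrage portfolios for violations of (i)--(iii), a separate argument for (iv), and for sufficiency an inductive construction of shadow prices $e_i\in[\bidy_i,\asky_i]$ (your $r_i^*$) whose piecewise linear interpolation is extended to a call function, followed by a shift of the resulting measure by a constant in $[-\epsilon,\epsilon]$ and an appeal to Lemma~\ref{lem:overview}. The paper's sufficiency proof differs only in bookkeeping: it first performs a WLOG reduction by appending an artificial strike $k_{N+1}$ with $\bidy_{N+1}=\asky_{N+1}=0$, and then gives the inductive bounds on $e_s$ explicitly via the auxiliary affine functions $f_{j,l},g_{s,i},h_{s,j}$---this is exactly the ``delicate step'' you flag. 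Your remark that (iv) rules out a plateau at positive height is the right intuition; in the paper's version, the same obstruction appears as a violation of condition~(i) for the augmented strike set, so the WLOG reduction is legitimate precisely when (iv) holds.

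The one genuine, if minor, methodological difference is your treatment of~(iv). The paper shows necessity by exhibiting a \emph{weak} (model-dependent) arbitrage: sell $C^j$ if $\pp(S_1^C>K_j)=0$, otherwise buy $C^i-C^j$ (or $S-C^j$ when $i=0$). You instead argue directly from the squeeze $\bidy_j\le R_{\mu_1}(k_j)\le R_{\mu_1}(k_i)\le\asky_i$, which is cleaner and perfectly valid for the theorem as stated. Note, however, that the paper's weak-arbitrage construction does double duty: it also feeds into Proposition~\ref{prop:weaksing}, which establishes the full trichotomy (failure of only~(iv) yields weak but not model-independent arbitrage). Your direct argument proves necessity of~(iv) but does not by itself supply the arbitrage strategies needed for that companion result.
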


This theorem is proved in Appendices~\ref{app:proof single} and~\ref{app:proof weak}.
We conclude that the trichotomy of consistency/weak arbitrage/model-independent arbitrage,
which was uncovered by Davis and Hobson~\cite{DaHo07} in the frictionless case,
persists under bid-ask spreads (at least in the one-period setting).
%
%

For $\epsilon=0$ and $\bidy_i=\asky_i=r_i$, the conditions from Theorem~\ref{thm:single} simplify to 
\[
   0 \geq \frac{r_{i+1}-r_i}{k_{i+1}-k_i} \geq  \frac{r_i-r_{i-1}}{k_{i}-k_{i-1}} \geq -1,  \quad \mbox{for} \ i \in \{1,\dots,N-1\},
\] 
and 
\[
   r_i=r_{i-1}  \quad \text{implies} \quad r_i=0, \quad \mbox{for} \ i \in \{1,\dots,N\}.
\]
These are exactly the conditions required in Theorem~3.1 of \cite{DaHo07}.

\begin{rem}
   Note that in contrast to the frictionless case, we do not have to require that bid
   or ask prices decrease as the strike increases,
   in order to get models which are $\epsilon$-consistent with the absence of arbitrage.
   This means that we do not have to require $\bidy_i \geq \bidy_j$ or $\asky_i \geq \asky_j$ for $i < j$,
   as shown in the following example.

   Consider two call options, where $\epsilon=0$ (no spread on the underlying), and the prices are given
   by $\bids_0=\asks_0=5, \ \asky_i=i+5, \ \bidy_i=1+\frac i2, k_i=i$ for $i=1,2$.
   We assume that the bank account is constant until maturity. 
   These prices and a possible choice of 
   shadow prices $e_i:= D(1)\mathbb E[(S^C_1-K_i)^+]$ are shown in Figure~\ref{fig:wachsend}.
   (Note that shadow prices are introduced 
   in the proof of Theorem~\ref{thm:single} in Appendix~\ref{app:proof single}.)
   %
   \begin{figure}[ht]
    \centering
     \includegraphics[width=0.7\textwidth]{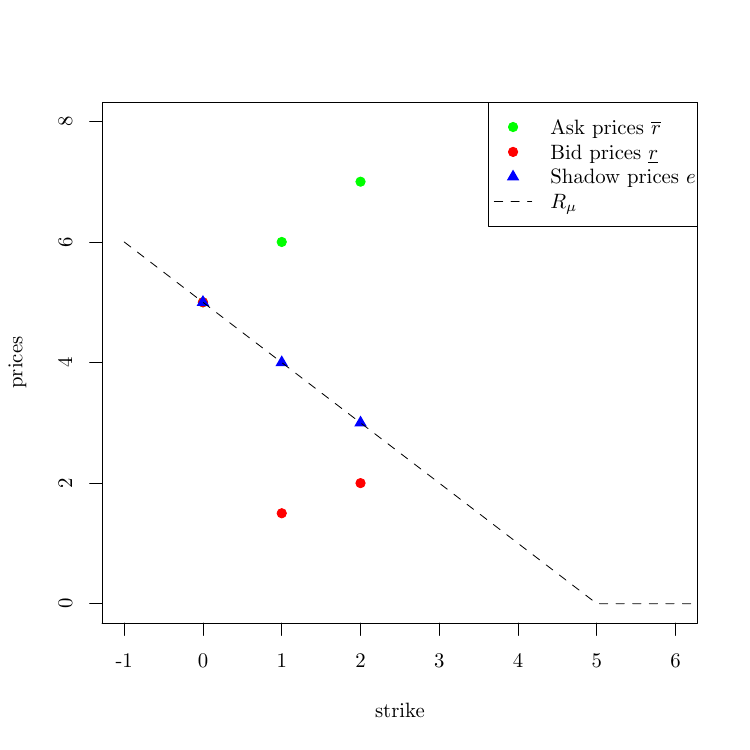}
    \caption{This example shows that it is not necessary that the ask-prices resp.\ bid-prices decrease w.r.t.\ strike. 
   The line represents the call function of $\delta_5$.}
    \label{fig:wachsend} 
   \end{figure}
   Clearly all conditions from Theorem~\ref{thm:single} are satisfied,
   and therefore there exists an arbitrage free model.
   For example we can choose $\mu=\delta_5$, where $\delta$ denotes the Dirac delta.
   This example shows that, in our setting, prices which are admissible from a no-arbitrage point of view
   do not necessarily make economic sense: 
   As the payoff of $C(K_2)$ at maturity never exceeds the payoff of $C(K_1)$, 
   the utility indifference ask-price of $C(K_2)$ should not be higher than the utility indifference  ask-price of $C(K_1)$.
\end{rem}
From Theorem~\ref{thm:single}, it is easy to explicitly compute the interval of
all~$\epsilon$ such that the given prices are $\epsilon$-consistent, which completes
the solution of the $\epsilon$-consistency problem in the one-period case.
Note that \eqref{eq:NAsing1}--\eqref{eq:NAsing4} clearly have to be satisfied
for $i,j,l > 0$, as these conditions depend on~$\epsilon$ only for $i=0$
(see also Proposition~\ref{prop:nobound} below).
\begin{cor} \label{cor:eps_single}
   Assume that the given prices satisfy equations~\eqref{eq:NAsing1}--\eqref{eq:NAsing4}
   for $i,j,l > 0$. Then for $\epsilon \geq 0$ the prices are $\epsilon$-consistent with the absence of arbitrage 
   if and only if $\epsilon$ satisfies:
	\begin{align*}
	   \epsilon &\geq \max\biggl\{ \asks_0 - \bids_0, \bids_0 - \bigl(\asky_i - k_i\bigr), 
		k_j - \frac{\bidy_j - \asks_0}{\asky_l - \bidy_j} \cdot \bigl(k_l - k_j \bigr) \biggr\}, \\
		& \qquad\quad
		\quad 1 \leq i \leq N, \ 1 \leq j < l  \leq N \ \text{such that} \ \asky_l > \bidy_j, \\
\epsilon &\leq \min\biggl\{ k_1, k_j - \frac{\bidy_j - \asks_0}{\asky_l - \bidy_j} \cdot \bigl(k_l - k_j \bigr) \biggr\}, 
\quad 1 \leq j < l  \leq N \ \text{such that} \ \asky_l < \bidy_j.
	\end{align*}
\end{cor}

\begin{proof}
  First, the inequalities $\epsilon \geq \asks_0 - \bids_0$ and $\epsilon \leq k_1$ 
  follow from the definition of $\epsilon$-consistency (see~\eqref{eq:bound eps} and~\eqref{eq:bd SC}). 
  The remaining inequalities follow by setting $i = 0$ in~\eqref{eq:NAsing1} and~\eqref{eq:NAsing2}.
\end{proof}

\section{Multiple maturities: equivalent conditions for consistency
and \texorpdfstring{$\epsilon$}{eps}-consistency}\label{sec:mult eq}

As mentioned in the introduction, our main goal is to find the least bound on the underlying's
bid-ask spread that enables us to reproduce given option prices.
The following result clarifies the situation if \emph{no} such bound is imposed
(see also Example~\ref{ex:motivate}). In our wording, we first seek conditions for
consistency (Definition~\ref{def:cons}) and not $\epsilon$-consistency
(Definition~\ref{def:eps}).
Recall the notation used in, and explained before, Theorem~\ref{thm:single}, where $i=0$ is allowed in
\eqref{eq:NAsing1}-\eqref{eq:NAsing4}, inducing a dependence of these conditions
on~$\bids_0$ and~$\asks_0$. In the following proposition, on the other hand, we require
 $i,j,l \geq1$, and therefore the current bid and ask prices
of the underlying are irrelevant when checking consistency of option prices.
Thus, the notion of $\epsilon$-consistency seems to make
more sense than consistency.

\begin{prop} \label{prop:nobound}
   The prices \eqref{eq:data1}--\eqref{eq:data4}
   are consistent with the absence of arbitrage (see Definition~\ref{def:cons})
   if and only if, 
   for all $t \in \mathcal{T}^*$, the conditions \eqref{eq:NAsing1}--\eqref{eq:NAsing4} from Theorem~\ref{thm:single} hold for $i,j,l \in \{1, \dots, N_t\}$.
\end{prop}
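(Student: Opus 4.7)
The plan is to exploit the decoupling between the reference price $S^C$ and the shadow price $S^*$ that becomes possible once no upper bound on $\asks_t-\bids_t$ is imposed: given any candidate marginals for $S^C$ and any martingale candidate for $S^*$, we may set $\bids_t = S^C_t \wedge S^*_t$ and $\asks_t = S^C_t \vee S^*_t$ to absorb their mismatch. Thus the problem collapses, maturity by maturity, to the existence of a probability measure $\mu_t$ on $(0,\infty)$ whose call function lies in the intervals $[\bidy_{t,i},\asky_{t,i}]$ at the strikes $k_{t,i}$ --- a purely marginal problem in which $\bids_0,\asks_0$ play no role for $t\geq 1$. This explains why the index $i=0$ can be dropped.

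For necessity, suppose a consistent model is given and let $\mu_t$ denote the law of $D(t)S^C_t$. The call function $R_{\mu_t}$ is convex, non-increasing, has slope in $[-1,0]$, and tends to $0$ at $+\infty$. Combined with $R_{\mu_t}(k_{t,i}) \in [\bidy_{t,i},\asky_{t,i}]$, these properties yield \eqref{eq:NAsing1} (from convexity at the three strikes $k_i<k_j<k_l$), \eqref{eq:NAsing2} (from the slope lower bound between $k_i$ and $k_l$), and \eqref{eq:NAsing3} (from monotonicity). For \eqref{eq:NAsing4}, an equality $\bidy_j = \asky_i$ with $i<j$ forces $R_{\mu_t}$ to be constant on $[k_i,k_j]$; combined with convexity, non-negativity, and $R_{\mu_t}(+\infty)=0$, this constant must vanish.

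For sufficiency, construct $\mu_t$ for each $t \geq 1$ by applying Theorem~\ref{thm:single} to the isolated maturity $t$ with auxiliary initial data $\bids_0'=\epsilon'>0$ chosen smaller than $k_{t,1}$ and $\asks_0'$ chosen larger than every $\bidy_{t,j}$ and than $2k_{t,N_t}$. A direct inspection of each of \eqref{eq:NAsing1}--\eqref{eq:NAsing4} shows that every instance involving the auxiliary index~$0$ is trivially satisfied under these choices, while the instances with $i,j,l\geq 1$ are precisely our hypotheses. Theorem~\ref{thm:single} and Lemma~\ref{lem:overview} then supply the desired $\mu_t$, supported on $[\epsilon',\infty)\subset(0,\infty)$. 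For $t=0$, $\mu_0$ is taken as a point mass at any admissible $s_0 \in [\bids_0,\asks_0]$ (if options of maturity zero are present, the hypothesis at $t=0$ determines an admissible location).

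To assemble the model, take a probability space carrying independent variables $X_t \sim \mu_t$ and set $S^C_t := B(t) X_t$; pick any $c \in [\bids_0,\asks_0]$ and put $S^*_t := c B(t)$, which is trivially a martingale in discounted units. Define $\bids_t := S^C_t \wedge S^*_t$, $\asks_t := S^C_t \vee S^*_t$ for $t \geq 1$, retaining the given values at $t=0$. Then $\bids_t \leq S^C_t, S^*_t \leq \asks_t$ and $\mathbb{E}[(D(t)S^C_t-k_{t,i})^+] = R_{\mu_t}(k_{t,i}) \in [\bidy_{t,i},\asky_{t,i}]$, so Definition~\ref{def:cons} is satisfied. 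The main technical point is the sufficiency step: carefully checking that the $i=0$ cases of the conditions in Theorem~\ref{thm:single} become vacuous under the auxiliary choices of $\bids_0',\asks_0',\epsilon'$. This is a finite bookkeeping exercise, robust because $\asks_0'$ may be taken arbitrarily large and $\bids_0'=\epsilon'$ arbitrarily small.
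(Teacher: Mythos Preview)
Your proposal is correct and follows essentially the same route as the paper. Both proofs reduce to constructing, for each maturity~$t$, a measure~$\mu_t$ whose call function hits the bid-ask intervals at the given strikes, and then pair the resulting~$S^C$ with a constant discounted shadow price $S^*_t = cB(t)$, absorbing the mismatch into $\bids_t = S^C_t \wedge S^*_t$ and $\asks_t = S^C_t \vee S^*_t$.

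The presentational differences are minor. For necessity, you argue directly from the convexity, monotonicity, and slope bounds of the call function~$R_{\mu_t}$; the paper instead invokes Theorem~\ref{thm:single}, whose necessity proof for indices $i,j,l\geq 1$ uses option-only arbitrage portfolios (butterflies and call spreads) that are insensitive to the spread on the underlying. Both arguments are standard and valid. For sufficiency, the paper simply reopens the construction of the shadow prices $e_{t,1},\dots,e_{t,N_t}$ from the proof of Theorem~\ref{thm:single}, omitting the index~$0$; you instead invoke Theorem~\ref{thm:single} as a black box with auxiliary $\epsilon'=\bids_0'$ small and $\asks_0'$ large so that every condition involving $i=0$ becomes vacuous. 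Your route is more modular but needs the bookkeeping you allude to (note that the specific bound $\asks_0'>2k_{t,N_t}$ you state may not suffice for all instances of~\eqref{eq:NAsing1} with $i=0$; one really needs $\asks_0'$ large enough that $\frac{\bidy_j-\asks_0'}{k_j-\epsilon'}$ undercuts every slope $\frac{\asky_l-\bidy_j}{k_l-k_j}$, which your closing remark acknowledges).
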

\begin{proof}
   By mimicking the proof of the first part of Theorem~\ref{thm:single} for $i,j,l>0$ we see that the conditions are necessary.
   Now fix $t \in \mathcal{T}^*$ and assume that the conditions hold.
   Exactly as in the sufficiency proof of Theorem~\ref{thm:single},  we can construct $e_{t,1},e_{t,2}, \dots, e_{t,N_t}$ such that
   $e_{t,i} \in [\bidy_{t,i}, \asky_{t,i}]$.
   The linear interpolation $L_t$ of the points $(k_{t,i},e_{t,i})_{i\in \{1,\dots,N_t\}}$ can 
   then be extended to a call function of a measure $\mu_t$ (see the final part of the sufficiency proof of Theorem~\ref{thm:single}).
   
   We define random variables $S_t^C$ such that the law of $D(t)S_t^C$ is given by $\mu_t$. 
   Then we have that 
   \[
     D(t)\mathbb{E}[(S_t^C-K_{t,i})^+]=e_{t,i} \in [\bidy_{t,i}, \asky_{t,i}], \quad i \in \{1,\dots N_t\}. 
   \]
   Furthermore, we pick $s \in [\bids_0,\asks_0]$ and set $\nu_t=\delta_s$ 
   (Dirac delta) for all $t \in \mathcal{T}^*$.
   Clearly, $(\nu_t)_{t \in \mathcal{T}^*}$ is a peacock, and we set $S_t^*=B(t)s$, which implies $D(t)S_t^* \sim \nu_t$.
   Finally, we define $\bids_t= S_t^* \wedge S_t^C$ and $\asks_t=S_t^* \vee S_t^C$, and have thus constructed an arbitrage free model.
\end{proof}
To prepare for our main result on $\epsilon$-consistency in the multi-period model,
we now recall the main result of~\cite{GeGu18},  which gives a criterion
for the existence of the peacock~$(\nu_t)$ from Lemma~\ref{lem:overview}.
Recall also the notation $W^\infty,\mathcal{M}$ introduced before Definition~\ref{def:conv}.
According to Proposition~3.2 in~\cite{GeGu18},
for $\epsilon>0$, a measure $\mu \in \setM$, and $m \in [\mathbb{E}\mu-\epsilon,\mathbb{E}\mu+\epsilon]$,
the set
\[
  \{\nu \in \setM : W^\infty(\mu,\nu)\leq \epsilon,\  \mathbb{E}\nu = m \}
\]
has a smallest and a largest element, and their respective call functions can
be expressed explicitly by the call function~$R_\mu$ of~$\mu$ (see~\eqref{eq:cf}) as follows:
\begin{align*}
R^{\min}_\mu(x;m,\epsilon)&=\Bigl(m+R_\mu(x-\epsilon)-\bigl(\mathbb{E}\mu+\epsilon \bigr) \Bigr) \vee R_\mu(x+\epsilon),  \\
R_\mu^{\max}(x;m,\epsilon)& = \conv\Bigl(m+R_\mu(\cdot +\epsilon)-\bigl(\mathbb{E}\mu-\epsilon \bigr) ,R_\mu(\cdot -\epsilon) \Bigr)(x),
\end{align*}
where $\conv$ denotes the convex hull.
The main theorem of~\cite{GeGu18} gives an equivalent condition for the existence of a peacock
within $W^\infty$-distance $\epsilon$ of a given sequence of measures.
\begin{thm}[Theorem~3.5 in~\cite{GeGu18}]\label{thm:peacocks}
   Let $\epsilon>0$ and $(\mu_n)_{n \in \setN}$ be a sequence in $\setM$ such that 
   \[
     I:= \bigcap_{n \in \setN}[\mathbb{E}\mu_n-\epsilon, \mathbb{E}\mu_n+\epsilon] 
   \]
   is not empty. 
   Then there exists a peacock $(\nu_n)_{n \in \setN}$ 
   such that 
   \begin{equation} \label{eq:Winf}
     W^\infty(\mu_n, \nu_n) \leq \epsilon, \quad \mbox{for all} \ n \in \setN,
   \end{equation}
   if and only if for some $m \in I$
   and for all $N \in \setN$,  $x_1, \dots, x_N \in \setR$,  
   we have
\begin{align} \label{minklmaxwinf}
   R_{\mu_1}^{\min}(x_1; m, \epsilon)+ \sum_{n=2}^{N}\Big(R_{\mu_n}(x_n+\epsilon \sigma_n )-R_{\mu_n}(x_{n-1}+\epsilon \sigma_n )\Big) \leq R_{\mu_{N+1}}^{\max}(x_N; m, \epsilon).
\end{align}
Here,  \textnormal{$\sigma_n= \sgn(x_{n-1}-x_n)$}
depends on~$x_{n-1}$ and~$x_n$.
   In this case it is possible to choose $\mathbb{E}\nu_1=\mathbb{E}\nu_2=\dots=m$. 
\end{thm}
We can now give a partial solution to the multi-period $\epsilon$-consistency problem.
The existence of the measures~$\mu_t$ from Lemma~\ref{lem:overview} (the marginals
of~$DS^C$) has to be assumed, but the existence of the peacock $(\nu_t)$ can be
replaced by fairly explicit conditions, using Theorem~\ref{thm:peacocks}.
\begin{thm}\label{thm:main}
   For $\epsilon \geq 0$ the prices \eqref{eq:data1}--\eqref{eq:data4}
   are $\epsilon$-consistent with the absence of arbitrage, 
   if and only if $\asks_0-\bids_0 \leq \epsilon$ and there is a sequence of
   finitely supported measures $(\mu_t)_{t \in \mathcal{T}^*}$ in $\setM$ such that:
   \begin{enumerate}
     \item $R_{\mu_t}(k_{t,i}) \in [\bidy_{t,i}, \asky_{t,i}]$ for all $t \in \mathcal{T}^*$ and $i \in \{1,\dots, N_t\}$, and $\mu_t([\epsilon,\infty))=1$ for $t\in\mathcal T^*$,
     \item There is
     \[
       m \in \bigcap_{t \in \mathcal T^*}[\mathbb{E}\mu_t-\epsilon, \mathbb{E}\mu_t+\epsilon] 
       \cap [\bids_0, \asks_0]
     \]
     such that for all $N \in \{1,\dots,T-1\}$ and $x_1,\dots,x_N\in\mathbb R$
     \[
       R_{\mu_1}^{\min}(x_1; m, \epsilon)+ \sum_{n=2}^{N}\Big(R_{\mu_n}(x_n+\epsilon \sigma_n )-R_{\mu_n}(x_{n-1}+\epsilon \sigma_n )\Big) \leq R_{\mu_{N+1}}^{\max}(x_N; m, \epsilon),
     \]
     where $\sigma_n$ is as in Theorem~\ref{thm:peacocks} and $\mu_n := \mu_T$
     for $n>T$.
   \end{enumerate}
\end{thm}
\begin{proof}
   Immediate from Lemma~\ref{lem:overview} and Theorem~\ref{thm:peacocks}.
\end{proof}
As we allow an arbitrary reference price process~$S^C$ in Definitions~\ref{def:model}
and~\ref{def:cons}, our notion of consistency is fairly weak. It can be weakened further
by requiring that the bound~\eqref{eq:bound eps} holds only with a certain probability
instead of almost surely. However, according to the following theorem,
we can always find such a model as soon as the prices are consistent.
\begin{thm} \label{thm:pbound}
   Let $p \in (0,1]$ and $\epsilon \geq 0$.
   For given prices \eqref{eq:data1}--\eqref{eq:data4} the following are equivalent:
   \begin{itemize}
      \item[(i)] The prices satisfy Definition~\ref{def:eps} ($\epsilon$-consistency), but
      with~\eqref{eq:bound eps} replaced by the weaker condition
      \[
          \mathbb{P} \Bigl( \asks_t-\bids_t \geq \epsilon B(t)\Bigr) \leq p, \quad t \in \mathcal{T}.
      \]
      \item[(ii)]
      The prices are consistent with the absence of arbitrage.
   \end{itemize}
\end{thm}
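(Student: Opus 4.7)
The statement splits into two directions.

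For the implication ``weak $\epsilon$-consistency $\Rightarrow$ single-maturity conditions'' I argue by contrapositive. Suppose one of \eqref{eq:NAsing1}, \eqref{eq:NAsing2} or \eqref{eq:NAsing3} fails for some $t$ with indices $i,j,l\in\{1,\ldots,N_t\}$. Inspecting the proof of Theorem~\ref{thm:single} one observes that with all indices in this range (so that the underlying, accessed in that proof via the convention $k_0=\epsilon$, is excluded), the associated arbitrage strategies are static portfolios of call options only, whose non-negative payoff depends solely on the reference process $S^C$. They therefore yield model-independent arbitrage in any model satisfying $0<\bids_t\leq S^C_t\leq\asks_t$, irrespective of whether the bound on $\asks_t-\bids_t$ is deterministic or merely probabilistic, and so preclude the weakened $\epsilon$-consistency. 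Condition~\eqref{eq:NAsing4} is handled analogously via the weak-arbitrage argument from Proposition~\ref{prop:weaksing}, which again uses only options.

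For the converse I construct a model. By the proof of Proposition~\ref{prop:nobound}, the single-maturity conditions yield measures $(\mu_t)_{t\in\mathcal{T}}$ with $R_{\mu_t}(k_{t,i})\in[\bidy_{t,i},\asky_{t,i}]$ for every $i\in\{1,\ldots,N_t\}$; we have considerable latitude in this choice. Fix $s\in[\bids_0,\asks_0]$ and introduce, on an enlarged probability space, an independent Bernoulli random variable $U$ with $\mathbb{P}(U=1)=p$, revealed only from time~$1$ onward. The plan is to build $S^C$ and $S^*$ so that $S^*_t=S^C_t$ on $\{U=0\}$ (whence the spread $\asks_t-\bids_t$ vanishes there), while on $\{U=1\}$ the shadow price $S^*_t$ follows a deterministic path $cB(t)$; on this low-probability event the spread may be arbitrarily large.

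Let $\widetilde{\mu}_t$ and $\widehat{\mu}_t$ denote the conditional laws of $D(t)S^C_t$ given $\{U=0\}$ and $\{U=1\}$ respectively, so that $\mu_t=(1-p)\widetilde{\mu}_t+p\widehat{\mu}_t$. The martingale requirement for $(D(t)S^*_t)_{t\in\mathcal{T}}$ on $\{U=0\}$ translates into $(\widetilde{\mu}_t)_{t\in\mathcal{T}}$ being a peacock with some constant mean $m$; the analogous requirement on $\{U=1\}$ forces the path $c_t$ to be constant in $t$, say~$c$; and the joint mean constraint reduces to $(1-p)m+pc=s$. Once such a decomposition of each $\mu_t$ is available, the model assembles as above and satisfies $\mathbb{P}(\asks_t-\bids_t\geq\epsilon B(t))\leq\mathbb{P}(U=1)=p$.

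The main obstacle is therefore exhibiting, under the single-maturity conditions, a peacock $(\widetilde{\mu}_t)_{t\in\mathcal{T}}$ of constant mean with $(1-p)\widetilde{\mu}_t\leq\mu_t$ for every $t$, where $\mu_t$ is adjustable within the bid-ask intervals. The hypothesis $p>0$ is essential here: any violation of the peacock property by the natural candidates $(\mu_t)$ can be absorbed into the low-probability event $\{U=1\}$, on which $S^*$ is unconstrained relative to $S^C$. The required argument parallels Theorem~\ref{strassenwinf} but with the infinity Wasserstein distance replaced by the Prokhorov-type bound $\mathbb{P}(|X-Y|\geq\epsilon)\leq p$, and combined with the freedom to choose each $\mu_t$ within $[\bidy_{t,i},\asky_{t,i}]$ it yields the required decomposition and completes the construction.
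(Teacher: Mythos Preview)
Your necessity argument is correct and matches the paper's (very terse) treatment: for indices $i,j,l\geq 1$ the arbitrage portfolios in the proof of Theorem~\ref{thm:single} involve only call options and the reference price~$S^C$, so they rule out any model regardless of how the spread on the underlying is bounded.

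The sufficiency direction, however, has a genuine gap. You reduce everything to a decomposition $\mu_t=(1-p)\widetilde\mu_t+p\widehat\mu_t$ with $(\widetilde\mu_t)_t$ a peacock of constant mean, and then assert that an argument ``paralleling Theorem~\ref{strassenwinf}'' delivers it. But this is exactly the hard step, and you do not carry it out. The constraint $(1-p)\widetilde\mu_t\leq\mu_t$ forces $\operatorname{supp}\widetilde\mu_t\subseteq\operatorname{supp}\mu_t$; if the call prices pin each $\mu_t$ close to a Dirac mass at a $t$-dependent point, then every admissible $\widetilde\mu_t$ is that same Dirac mass and no peacock exists. You invoke the freedom to adjust $\mu_t$ within the bid--ask intervals, but never show this freedom suffices. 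Moreover, the ``Prokhorov-type analogue of Theorem~\ref{strassenwinf}'' you allude to would produce a peacock $(\nu_t)$ \emph{close to} $(\mu_t)$, not a sub-probability decomposition of $\mu_t$ itself; these are different constructions, and the sentence ``it yields the required decomposition'' conflates them.

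The paper bypasses the decomposition entirely. It fixes $(\mu_t)$ once and for all as in the proof of Proposition~\ref{prop:nobound}, picks $s\in[\bids_0,\asks_0]$, and invokes Theorem~\ref{thm: dppstrassen} (Strassen's theorem for the modified Prokhorov distance, proved in~\cite{GeGu15A}): for \emph{any} sequence $(\mu_t)$ in~$\setM$, any $\epsilon>0$, any $p\in(0,1]$ and any mean $m\in\setR$, there is a peacock $(\nu_t)$ with mean~$m$ and $d_p^{\mathrm P}(\mu_t,\nu_t)\leq\epsilon$. Proposition~\ref{prop:DudleyStrassen} then turns this into processes $\widetilde S^C_t\sim\mu_t$ and a martingale $\widetilde S^*_t\sim\nu_t$ with $\pp\bigl(|\widetilde S^C_t-\widetilde S^*_t|>\epsilon\bigr)\leq p$, and one sets $\bids_t=S^C_t\wedge S^*_t$, $\asks_t=S^C_t\vee S^*_t$. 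The black box you gesture at in your final sentence is precisely Theorem~\ref{thm: dppstrassen}; once it is available, no Bernoulli variable~$U$, no decomposition of~$\mu_t$, and no adjustment within the bid--ask intervals is needed.
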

For the proof of Theorem~\ref{thm:pbound} we employ a result from~\cite{GeGu18}
on the modified Prokhorov distance.
\begin{definition}
For $p \in [0,1]$  and two probability measures $\mu,\nu$ on $\setR$,
 we define the \emph{modified Prokhorov distance} as
   \begin{align*}
     d_p^\mathrm{P}(\mu,\nu):= \inf \Bigl\{h >0: \nu(A) \leq \mu(A^h)+p, \ \text{for all closed sets} \ A \subseteq \setR \Bigr\}.
   \end{align*}
\end{definition}
(To define the standard Prokhorov distance, replace~$p$ by~$h$ in the right-hand side.)
Note that $d_0^\mathrm{P}=W^\infty.$
A well known result, which was first proved by Strassen, and was then extended by Dudley~\cite{Du68,St65}, explains the connection of $d_p^\mathrm{P}$ to minimal distance couplings.
\begin{prop} \label{prop:DudleyStrassen}
   Given measures $\mu, \nu$ on $\setR$, $p \in [0,1]$, and $\epsilon>0$, there exists a probability space~$(\Omega, \mathcal{F}, \pp)$ with random variables $X\sim \mu$ and $Y \sim \nu$ such that 
   \begin{equation} \label{smalldist} 
     \pp\big( \bigl|X-Y|> \epsilon  \big) \leq p,
   \end{equation}
   if and only if
    \begin{equation} 
     d^\mathrm{P}_p(\mu,\nu) \leq \epsilon.
   \end{equation}
\end{prop}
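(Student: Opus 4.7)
The plan is to prove the two implications separately. The forward direction (existence of a coupling implies the distance bound) is straightforward and essentially a one-line probabilistic estimate. The reverse direction (distance bound implies existence of a coupling) is the substantive content, and is really an instance of Strassen's classical marginal problem theorem specialised to the closed relation $F = \{(x,y)\in\setR^2 : |x-y|\le \epsilon\}$.

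For the easy direction, suppose $(X,Y)$ is constructed on some $(\Omega,\mathcal{F},\pp)$ with $X\sim\mu$, $Y\sim\nu$, and $\pp(|X-Y|>\epsilon) \le p$. For any closed $A\subseteq\setR$, the event $\{Y\in A\}\cap\{|X-Y|\le\epsilon\}$ is contained in $\{X\in A^\epsilon\}$ (where $A^\epsilon$ denotes the closed $\epsilon$-fattening of $A$). Hence
\[
  \nu(A) \le \pp(X\in A^\epsilon) + \pp(|X-Y|>\epsilon) \le \mu(A^\epsilon) + p \le \mu(A^h)+p
\]
for every $h>\epsilon$, yielding $d_p^{\mathrm{P}}(\mu,\nu)\le\epsilon$.

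For the reverse direction, assume $d_p^{\mathrm{P}}(\mu,\nu)\le\epsilon$, i.e., $\nu(A)\le\mu(A^h)+p$ for every closed $A$ and every $h>\epsilon$. Since $A$ is closed, $A^h$ is closed and $A^h\downarrow A^\epsilon$ as $h\downarrow\epsilon$; by continuity of $\mu$ along decreasing sequences, $\nu(A)\le\mu(A^\epsilon)+p$. Now invoke the Strassen marginal-problem theorem (in the form given, e.g., as Theorem~11.6.2 in Dudley's \emph{Real Analysis and Probability}): for Borel probability measures $\mu,\nu$ on a Polish space and a closed set $F\subseteq\setR^2$, there exists a coupling with $\pp((X,Y)\in F)\ge 1-p$ if and only if $\nu(A)\le\mu(F^{-1}(A))+p$ for every closed $A$, where $F^{-1}(A)=\{x: \exists\,y\in A,\ (x,y)\in F\}$. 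Apply this with $F=\{(x,y):|x-y|\le\epsilon\}$, which is closed and satisfies $F^{-1}(A)=A^\epsilon$ for any closed~$A$. The required inequality is exactly what was just derived, so we obtain $(X,Y)$ with $\pp(|X-Y|\le\epsilon)\ge 1-p$, i.e.,~\eqref{smalldist}.

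The main obstacle is not any single computation but rather identifying and citing the correct general form of Strassen's theorem, since the standard textbook statements are phrased for arbitrary closed relations in Polish spaces and one must check that specialising to $F=\{|x-y|\le\epsilon\}$ reduces the hypothesis to the bound $\nu(A)\le\mu(A^\epsilon)+p$. Given that reduction, the argument is essentially a translation between the distance formulation and the coupling formulation; no delicate constructions beyond what is already contained in the Strassen/Dudley framework are required.
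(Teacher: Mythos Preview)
Your proof is correct. Note, however, that the paper does not actually supply its own proof of this proposition: it is stated as ``a well known result, which was first proved by Strassen, and was then extended by Dudley'' with citations to~\cite{Du68,St65}, and then used as a black box. Your write-up therefore fills in precisely the argument the paper outsources to the literature, and it does so in the standard way: the forward implication is the one-line probabilistic estimate, and the reverse implication is the specialisation of Strassen's general marginal-problem theorem to the closed relation $F=\{(x,y):|x-y|\le\epsilon\}$, together with the monotone-limit passage $A^h\downarrow A^\epsilon$ to get from the infimum condition $d_p^{\mathrm P}(\mu,\nu)\le\epsilon$ to the pointwise bound $\nu(A)\le\mu(A^\epsilon)+p$. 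There is nothing to compare against beyond the citation, and your reduction is exactly the intended one.
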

The following result shows that, unlike for~$W^\infty$, there \emph{always} exists
an approximating peacock w.r.t.\ $d_p^\mathrm{P}$ for $0<p\leq1$. This explains
why the very weak condition of consistency is sufficent to imply~(i)
in Theorem~\ref{thm:pbound}.
\begin{thm}[Theorem~8.3 in~\cite{GeGu18}] \label{thm: dppstrassen}
   Let $(\mu_n)_{n \in \setN}$ be a sequence in $\setM$, $\epsilon>0$, and $p \in (0,1]$. Then, for all $m \in \setR$ there exists a peacock $(\nu_n)_{n \in \setN}$ with mean $m$ such that
   \[
     d_p^\mathrm{P}(\mu_n, \nu_n) \leq \epsilon.
   \]
\end{thm}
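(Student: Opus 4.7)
The plan exploits the considerable slack built into $d_p^{\mathrm{P}}$ when $p>0$: by Proposition~\ref{prop:DudleyStrassen}, any two measures that agree on a common $(1-p)$-mass portion are at $d_p^{\mathrm{P}}$-distance zero. This flexibility is what allows the conclusion to hold without any compatibility assumption on $(\mu_n)$, in sharp contrast with Theorem~\ref{strassenwinf} for $W^\infty$.

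First, for each $n \in \setN$ choose a bounded central interval $A_n$ with $\mu_n(A_n) = 1-p$ and decompose $\mu_n = (1-p)\mu_n^{\mathrm{c}} + p\mu_n^{\mathrm{t}}$ into probability measures supported on $A_n$ and $A_n^c$, respectively. Set
\[
  \nu_n := (1-p)\mu_n^{\mathrm{c}} + p\sigma_n,
\]
with probability measures $(\sigma_n)$ to be determined, subject to $\E\sigma_n = (m-(1-p)\E\mu_n^{\mathrm{c}})/p$ so that $\E\nu_n = m$. The distance bound is then automatic: couple $X \sim \mu_n$ to $Y \sim \nu_n$ by identifying the shared $(1-p)\mu_n^{\mathrm{c}}$-component and taking the residuals independent; this gives $\pp(X \ne Y) \le p$, whence $d_p^{\mathrm{P}}(\mu_n, \nu_n) = 0 \le \epsilon$ by Proposition~\ref{prop:DudleyStrassen}.

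The remaining and only real difficulty is to choose $(\sigma_n)$ so that $(\nu_n)$ is a peacock. Testing $\nu_n \leqc \nu_{n+1}$ against a convex function $\phi$ reduces to
\[
  p\Bigl(\int\phi\,d\sigma_{n+1} - \int\phi\,d\sigma_n\Bigr) \geq (1-p)\Bigl(\int\phi\,d\mu_n^{\mathrm{c}} - \int\phi\,d\mu_{n+1}^{\mathrm{c}}\Bigr).
\]
Since convex functions attain their supremum over a bounded interval at its endpoints, the right-hand side is uniformly controlled by $\max\{\phi(\pm M_n), \phi(\pm M_{n+1})\}$, where $M_n$ bounds the support of $\mu_n^{\mathrm{c}}$. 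Taking $\sigma_n$ to be the symmetric two-point mass at $\E\sigma_n \pm R_n$ with $R_n \uparrow \infty$ fast enough, and using that every non-affine convex $\phi$ satisfies $\phi(\alpha+R)+\phi(\alpha-R) \to \infty$ at least linearly in $R$, the inequality is verified for all non-affine $\phi$; the affine case holds automatically by mean matching. The main obstacle is precisely this step: the peacock property is a global constraint coupling all $n$ simultaneously, and must hold uniformly over an infinite family of test functions, whereas the $d_p^{\mathrm{P}}$-bounds are local. The trick is that atomic measures $\sigma_n$ with unbounded, rapidly growing support provide enough ``convex mass'' to overwhelm any convex-order violation caused by the boundedly supported $\mu_n^{\mathrm{c}}$, while contributing nothing to the $d_p^{\mathrm{P}}$-distance.
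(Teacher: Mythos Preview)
The paper does not give its own proof of this statement; it simply refers to Section~8 of~\cite{GeGu15A}. I therefore assess your proposal on its own merits.

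Your strategy is sound and rather elegant: keep a boundedly supported $(1-p)$-mass ``core'' $\mu_n^{\mathrm c}$ of each $\mu_n$ and replace the remaining $p$-mass by a two-point measure $\sigma_n$ of prescribed mean and growing spread. The coupling argument giving $d_p^{\mathrm P}(\mu_n,\nu_n)=0\le\epsilon$ is correct. Two points deserve attention.

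First, a minor one: ``choose a bounded central interval $A_n$ with $\mu_n(A_n)=1-p$'' need not be possible when $\mu_n$ has atoms. What you really use (and what always exists) is a decomposition $\mu_n=(1-p)\mu_n^{\mathrm c}+p\mu_n^{\mathrm t}$ with $\mu_n^{\mathrm c}$ a boundedly supported probability measure: pick $M_n$ with $\mu_n([-M_n,M_n])\ge 1-p$ and let $(1-p)\mu_n^{\mathrm c}$ be any sub-measure of $\mu_n|_{[-M_n,M_n]}$ of total mass $1-p$.

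Second, and this is a genuine gap: your verification of the peacock property does not go through as written. The claim ``every non-affine convex $\phi$ satisfies $\phi(\alpha+R)+\phi(\alpha-R)\to\infty$ at least linearly in $R$'' is true pointwise in~$\phi$, but the linear rate is not uniform over convex~$\phi$ (take $\phi(y)=(y-x)^+$ with $|x|$ large), so it does not by itself produce a \emph{single} $R_{n+1}$ that works for all convex test functions simultaneously. Moreover, the negative term $-\int\phi\,d\sigma_n$ on the left also grows with~$R_n$ and must be controlled. The clean fix is to test only against call functions: since all $\nu_n$ have mean~$m$, $\nu_n\leqc\nu_{n+1}$ is equivalent to $R_{\nu_n}(x)\le R_{\nu_{n+1}}(x)$ for all $x\in\setR$. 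Let $K_n$ bound the support of $\nu_n$. For $x\ge K_n$ one has $R_{\nu_n}(x)=0$; for $x\le -K_n$ one has $R_{\nu_n}(x)=m-x\le R_{\nu_{n+1}}(x)$ (the latter by $R_\nu\ge(\E\nu-\cdot)^+$). For $x\in[-K_n,K_n]$, $R_{\nu_n}(x)\le K_n-x$, while $R_{\nu_{n+1}}(x)\ge pR_{\sigma_{n+1}}(x)=\tfrac{p}{2}(\alpha_{n+1}+R_{n+1}-x)$ provided $R_{n+1}>K_n+|\alpha_{n+1}|$; comparing these at the worst case $x=-K_n$ shows that any $R_{n+1}\ge \tfrac{4}{p}K_n+|\alpha_{n+1}|$ suffices. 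Choosing $R_{n+1}$ inductively by this rule completes your argument.
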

\begin{proof}[Proof of Theorem~\ref{thm:pbound}]
   (i) implies~(ii) by definition.
   To show the other implication, 
   we define probability measures $(\mu_t)_{t \in \mathcal{T}^*}$ as in 
   the proof of Proposition~\ref{prop:nobound},
   such that $R_{\mu_t}(k_{t,i}) \in [\bidy_{t,i}, \asky_{t,i}]$ for $i \in \{1,\dots N_t\}$
   and $t \in \mathcal{T}^*$.
   Now we pick $s \in [\bids_0,\asks_0]$. Then by, Theorem~\ref{thm: dppstrassen},
   there exists a peacock $(\nu_t)_{t \in \mathcal{T}^*}$ with mean $s$
   such that $d_p^\mathrm{P}(\mu_t,\nu_t)\leq \epsilon$ for all $t \in \mathcal{T}^*$. 
   We can now use Proposition~\ref{prop:DudleyStrassen} and proceed as in the proof
   of Lemma~\ref{lem:overview} to conclude that there exist stochastic processes 
   $(\widetilde{S}^C_t)_{t \in \mathcal{T}^*}$ and $(\widetilde{S}^*_t)_{t \in \mathcal{T}^*}$  whose marginal distributions 
   are given by $(\mu_t)_{t \in \mathcal{T}^*}$ resp.\ $(\nu_t)_{t \in \mathcal{T}^*}$, such that $(\widetilde{S}^*_t)_{t \in \mathcal{T}^*}$ is a martingale and such that
   \begin{align*} 
       \mathbb{P} \Bigl( \bigl|\widetilde{S}^*_t-\widetilde{S}^C_t \bigr| \geq \epsilon \Bigr) \leq p, \quad t \in \mathcal{T}^*.
    \end{align*}
   The coupling lemma we use (Lemma~9.1 in~\cite{GeGu18}) was formulated in~\cite{GeGu18}
   for the special case $p=0$, but the proof trivially extends to $p\in[0,1]$.
   We then simply put 
   \[
S^*_t=B(t)\widetilde{S}^*_t, \quad S_t^C=B(t)\widetilde{S}^C_t, \quad \bids_t= S_t^* \wedge S_t^C, \quad \mbox{and} \ \asks_t=S_t^* \vee S_t^C. \qedhere
   \]
\end{proof}

\section{Multiple maturities: necessary conditions for \texorpdfstring{$\epsilon$}{eps}-consistency} \label{sec:mult nec}

The main result of the preceding section (Theorem~\ref{thm:main}) gives semi-explicit
equivalent conditions for $\epsilon$-consistency. The goal of the present section
is to provide explicit necessary conditions.
For a single maturity, the $\epsilon$-consistency conditions (Theorem~\ref{thm:single})
are a generalization of the frictionless conditions in~\cite{Co07,DaHo07}.
They guarantee that for each  maturity $t \in \mathcal{T}^*$ the option prices 
can be associated to a measure $\mu_t$, such that $\mathbb{E}\mu_t \in [\bids_0,\asks_0]$
(cf.\ Lemma~\ref{lem:overview}).
In this section we state \emph{necessary} conditions for multiple periods.
Our conditions (see Definition~\ref{def:CVB} and Theorem~\ref{thm:NACVB})
are fairly involved, and we thus expect that it might not be
easy to obtain tractable \emph{equivalent} conditions. 
In the case where there is only a spread on the options, but not
on the underlying, it suffices to compare prices with only three or two different maturities
(see equations (4), (5) and (6) in~\cite{Co07} and Corollary 4.2 in~\cite{DaHo07})
to obtain suitable consistency conditions. These conditions ensure that the family of measures $(\mu_t)_{t \in \mathcal{T}^*}$ is a peacock.

If we consider a bid-ask spread on the underlying and
want to check for $\epsilon$-consistency according to
Definition~\ref{def:eps} ($\epsilon > 0$), it turns out that
we need conditions that involve all maturities simultaneously (this will become clear
by condition~\eqref{minklmaxwinf} below).
We thus introduce calendar vertical baskets (CVB), portfolios which consist of various long and short positions in the call options.
We first give a definition of CVBs.
Then, in Lemma~\ref{le:CVB} we will study a certain trading strategy involving a short position in a CVB.
This strategy will then serve as a base for the conditions in Theorem~\ref{thm:NACVB},
which is the main result of this section.
Note that our definition of a CVB depends on $\epsilon\geq 0$:
the contract defined in Definition~\ref{def:CVB} only provides necessary conditions in markets 
where the bid-ask spread is bounded by $\epsilon\geq 0$.

\begin{definition} \label{def:CVB}
   Fix $u \in \mathcal \{1, \dots, T-1\}$ and $\epsilon\geq 0$ and assume that vectors 
   $\boldsymbol{\sigma}=(\sigma_1,\dots,\sigma_u)$, $\boldsymbol{x}=(x_1,\dots,x_u)$, $\boldsymbol{I}=(i_2, \dots, i_{u})$ and $\boldsymbol{J}=(j_1, \dots, j_u)$
   are given, such that 
   \begin{enumerate} [label=(\textit{\roman*})]
     \item $x_t \in \setR$ for all $t \in \{1,\dots, u\}$,
     \item $\sigma_1 \in \{-1,1\}$ and $\sigma_t=\sgn(x_{t-1}-x_t)$ for all $t \in \{2,\dots,u\}$,
     \item $j_t \in \{0,\dots, N_{t}\}$ and $k_{t,j_t} = x_t+\epsilon\sigma_{t}$ for all $t \in \{1,\dots, u\}$,
     \item $i_t \in \{0,\dots, N_{t}\}$ and either $k_{t,i_t} \leq x_{t-1}+\epsilon\sigma_{t}$ or $i_t=0$ for all $t \in \{2,\dots, u\}$.
   \end{enumerate}
   Then we define a \emph{calendar vertical basket} with these parameters as the contract
   \begin{align} \label{eq:CVB1}
     CVB_u(\boldsymbol{\sigma},\boldsymbol{x},\boldsymbol{I},\boldsymbol{J}) =
     C_1(K_{1,j_1}) + \sum_{t=2}^u \Bigl(C_t\bigl(K_{t,j_t}\bigr)- C_t\bigl(K_{t,i_{t}}\bigr)\Bigr)-2\epsilon \ind_{\{\sigma_1=-1\}}.
   \end{align}
   The market ask resp.\ bid-price of $CVB_u(\boldsymbol{\sigma},\boldsymbol{x},\boldsymbol{I},\boldsymbol{J})$ are given by
   \begin{align} \nonumber
      \asky^{CVB}_u(\boldsymbol{\sigma},\boldsymbol{x},\boldsymbol{I},\boldsymbol{J}) &= \asky_{1,j_1} + \sum_{t=2}^u \bigl(\asky_{t,j_t}-\bidy_{t,i_{t}} \Bigr) -2\epsilon \ind_{\{\sigma_1=-1\}}, \\
      \bidy^{CVB}_u(\boldsymbol{\sigma},\boldsymbol{x},\boldsymbol{I},\boldsymbol{J}) &= \bidy_{1,j_1} + \sum_{t=2}^u \bigl(\bidy_{t,j_t}-\asky_{t,i_{t}} \Bigr) +2\epsilon \ind_{\{\sigma_1=-1\}} \label{eq:CVBprice}.
   \end{align}
   We will refer to~$u$ as the maturity of the CVB.
\end{definition}

\begin{lem} \label{le:CVB}
   Fix $\epsilon\geq 0$.
   For all parameters $u,\boldsymbol{\sigma},\boldsymbol{x},\boldsymbol{I},\boldsymbol{J}$ as in Definition~\ref{def:CVB},
   there is a self-financing semi-static portfolio~$\Phi$ whose initial value is given by $r_\Phi=-\bidy^{CVB}_u(\boldsymbol{\sigma},\boldsymbol{x},\boldsymbol{I},\boldsymbol{J})$,
   such that for all models satisfying~\eqref{eq:bound eps} and~\eqref{eq:bd SC} 
	and for all $t \in \{2,\dots,u+1\}$ 
	one of the following conditions holds:
   \begin{enumerate}
   \item $\phi_{t}^0 \geq 0$ and $\phi_t^1=0$, or  
   \item $\phi_{t}^0 \geq k_{t,j}-\epsilon\sigma_t $ and $\phi_t^1=-1$.
   \end{enumerate}
   In particular, all corresponding cash-flows are non-negative.
\end{lem}
The arguments of $\phi_t^0,\phi_t^1$ are of course the
same as in~\eqref{eq:shares}, and are omitted for brevity.
In the proof of Lemma~\ref{le:CVB}, we define the functions  $\phi_t^0,\phi_t^1$ inductively.
As we are defining a model-independent strategy, we could also use the deterministic dummy
variables~\eqref{eq:bold s}
from Definition~\ref{def:semistatic} as arguments. It seems more natural to write
$(\bids_u)_{u\leq t} , (S_u^C)_{u\leq t} , (\asks_u)_{u\leq t}$,
though. We just have to keep in mind that $\phi_t^0,\phi_t^1$
have to be constructed as \emph{functions} of
$(\bids_u)_{u\leq t} , (S_u^C)_{u\leq t} , (\asks_u)_{u\leq t}$, without using the
\emph{distribution} of these random vectors. \\
Moreover, note that later on in Theorem~\ref{thm:NACVB} we will only need the case where $u < T$, therefore we excluded the case $u=T$.

\begin{proof}[Proof of Lemma~\ref{le:CVB}]   
   Assume that we buy the contract 
   \begin{align} \label{eq:CVB2}
    -CVB_u(\boldsymbol{\sigma},\boldsymbol{x},\boldsymbol{I},\boldsymbol{J})= -C_1(K_{1,j_1}) + \sum_{t=2}^u \Bigl( C_t\bigl(K_{t,i_{t}}\bigr)-C_t\bigl(K_{t,j_t}\bigr)\Bigr)+2\epsilon \ind_{\{\sigma_1=-1\}},
   \end{align}
   thus we are getting an initial payment of $\bidy^{CVB}_u(\boldsymbol{\sigma},\boldsymbol{x},\boldsymbol{I},\boldsymbol{J})$.
   We have to keep in mind that if $i_t=0$ for some $t \in \{2,\dots, u\}$, then the
   corresponding expression in~\eqref{eq:CVB2} denotes a long position in the underlying, and if
   $j_t=0$ for some $t \in \{1,\dots, u\}$, then the expression~$-C_t(K_{t,j_t})$ in~\eqref{eq:CVB2} denotes a short position in the underlying 
   plus an additional deposit of $2\epsilon$ in the bank account at time~0
   (see the beginning of Section~\ref{sec:sing}).
   To ease notation, we will write $K_{t,i}$ instead of $K_{t,i_{t}}$ and $K_{t,j}$ instead of $K_{t,j_t}$.

   We will show inductively that after we have traded at time $t \in \{1,\dots,u\}$ we can end up in one of two scenarios:
   either the investor holds a non-negative amount of bank units (i.e., $\phi_{t+1}^0\geq 0$), we will call this scenario~A, 
   or we have one short position in the underlying (i.e., $\phi^1_{t+1}=-1$) and $\phi^0_{t+1}\geq k_{t,j}-\epsilon\sigma_t $; 
   we will refer to this as scenario~B.
   Note that scenarios~A and~B are not disjoint, but this will not be a problem.

   We will first deal with the case where $\sigma_1=-1$ and afterwards with the case $\sigma_1=1$.
   We start with $t=1$ and first assume that $j_1>0$.
   If $C_1(K_{1,{j}})$ expires out of the money, then we do not trade at time~1 and obtain $\phi^0_2 = 2 \epsilon \geq 0$, 
   so we are in scenario~A.
   Otherwise we sell one unit of the underlying, and thus
   \[
    \phi^0_2=2\epsilon+k_{1,j} +D(1)\bigl(\bids_1-S_1^C\bigr) \geq k_{1,j}+\epsilon =k_{1,{j}}-\sigma_1\epsilon,
   \]
   yielding scenario~B. Recall from Section~\ref{sec:notation} that $D(t)=B(t)^{-1}$.
   If $j_1=0$ then $k_{1,j}=\epsilon$.
   We do not close the short position in this case and 
   we get that $\phi^0_2 = 4\epsilon \geq k_{1,{j}}-\sigma_1\epsilon$, so we also get to scenario~B.

   For the induction step we split the proof into two parts. 
   In part~A we will assume that after trading time~$t-1$ we are in scenario~A,
   and in part~B we will assume that at the end of period~$t-1$ we are in scenario~B.

   \textbf{Part A}: 
   We will show that after we have traded at time~$t$ 
    we can end up  either in situation~A or~B.
   First we assume that $j_{t},i_{t}>0$, and so both expressions in~\eqref{eq:CVB2} with maturity~$t$ denote options (and not the underlying).
   Under these assumptions~$\phi_t^0$ satisfies
   \[
     \phi_{t+1}^0 \geq D(t)\bigl(S^C_{t}-K_{t,i}\bigr)^+-D(t)\bigl(S^C_{t}-K_{t,j}\bigr)^+.
   \]
   Clearly, if $K_{t,i} \leq K_{t,j}$ or if both options expire out of the money, then $\phi_{t+1}^0 \geq 0$,
   and we are in situation~A.
   So suppose that $K_{t,i} > K_{t,j}$ and that $S_t^C>K_{t,j}$. This also implies that $\sigma_{t}=1$.
   If this is the case, we go short one unit of the underlying, and~$\phi_{t+1}^0$ can be bounded from below as follows,
   \begin{align*}
    \phi_{t+1}^0 &\geq D(t)\bigl(S^C_{t}-K_{t,i}\bigr)^+-D(t)\bigl(S^C_{t}-K_{t,j}\bigr)+D(t)\bids_{t} \\
    & \geq k_{t,j}-\epsilon\sigma_{t}.
   \end{align*}
   This corresponds to situation~B. 
   Next assume that $j_t=0$ and $i_t>0$. Then we have that $k_{t,j}=\epsilon$.
   After trading time~$t$ we end up in scenario~B,
   \begin{align*}
     \phi_{t+1}^0 &\geq  D(t)\bigl(S^C_{t}-K_{t,i}\bigr)^+ +2\epsilon \geq k_{t,j}-\epsilon\sigma_{t}.
   \end{align*}
   We proceed with the case that $j_t>0$ and $i_{t}=0$. 
   As $k_{t,j}>\epsilon$, we can close the long position in the underlying and end up in scenario~A at the end of time~$t$,
   \begin{align*}
     \phi_{t+1}^0 \geq D(t)\bids_t - D(t)\bigl(S^C_{t}-K_{t,j}\bigr)^+ \geq 0.
   \end{align*}
   The case where $j_t=i_{t}=0$ is easily handled, because the long and the short position simply cancel out.
   We are done with part~A.

   \textbf{Part B:}
   Assume that after we have traded at time~$t-1$ we are in scenario~B, and thus 
   $
     \phi_{t}^0 =k_{t-1,j}- \epsilon\sigma_{t-1}.
   $
   First we will consider the case where $j_{t},i_{t}>0$.
   If at time~$t$ the option with strike $K_{t,j}$ expires in the money, we do not close the short position and have
   \begin{align*}
     \phi_{t+1}^0 & \geq \phi_{t}^0 + D(t)\bigl(S^C_{t}-K_{t,i}\bigr)^+-D(t)\bigl(S^C_{t}-K_{t,j}\bigr) \\
    &=  k_{t-1,j} -\epsilon\sigma_{t-1} + k_{t,j} - k_{t,i} \\
    &\geq k_{t,j} -\epsilon\sigma_{t},
   \end{align*}
   which means that we end up in scenario~B. 
  Now we distinguish two cases according to $x_{t-1} \leq x_{t}$ and $x_{t-1} > x_{t}$, 
  and always assume that $C_t(K_{t,j})$ expires out of the money.
   If $x_{t-1} \leq x_{t}$, then we also have that $k_{t,i} \leq k_{t,j}$ and that $\sigma_{t}=-1$.
   We close the short position to end up in scenario~A,
   \begin{align*}
     \phi_{t+1}^0 & \geq \phi_{t}^0 + D(t)\bigl(S^C_{t}-K_{t,i}\bigr)^+-D(t)\asks_{t} \\
            & \geq k_{t,i}- \epsilon\sigma_{t} - k_{t,i} -\epsilon \geq 0.
   \end{align*}
   If on the other hand $x_{t-1} > x_{t}$ and $\sigma_{t}=1$, we do not trade at time~$t$ to stay in scenario~B,
   \begin{align*}
     \phi_{t+1}^0
     & \geq \phi_{t}^0 + D(t)\bigl(S^C_{t}-K_{t,i}\bigr)^+ \\
     & > k_{t,j}- \epsilon\sigma_{t}.
   \end{align*}

   We proceed with the case where~$j_t=0$ and~$i_t>0$. 
   As before, we have $k_{t,j}=\epsilon$, and we can close one short position to stay in scenario~B,
   \begin{align*}
     \phi_{t+1}^0 &=\phi_{t}^0  + D(t)\bigl(S^C_{t}-K_{t,i}\bigr)^+ + 2\epsilon-D(t)\asks_{t}\\
     &\geq k_{t-1,j} -\epsilon\sigma_{t-1} -k_{t,i}+ \epsilon\\
     &\geq \epsilon - \epsilon\sigma_{t}= k_{t,j}- \epsilon\sigma_{t}.
   \end{align*}
   If $j_t>0$ and $i_{t}=0$, then we distinguish two cases: 
   either $C_t(K_{t,j})$ expires out of the money, in which case we cancel out the long and short position in the underlying and have
   \[
    \phi_{t+1}^0 \geq \phi_{t}^0 \geq 0,
   \]
   which corresponds to scenario~A. Or, $C_t(K_{t,j})$ expires in the money.
   Then we sell one unit of the underlying and hence we end up in scenario~B,
   \begin{align*}
    \phi_{t+1}^0 &\geq \phi_{t}^0 -D(t)\bigl(S_t^C-K_{t,j}\bigr)+ D(t) \bids_t \\ 
             &\geq k_{t-1,j} -\epsilon\sigma_{t-1} + k_{t,j} - \epsilon \\
             &\geq k_{t,j}-\epsilon\sigma_{t}.
   \end{align*}
   In the last inequality we have used that $k_{t-1,j} -\epsilon\sigma_{t-1} = x_{t-1} \geq k_{t,i}-\epsilon\sigma_{t}$,
   and that $k_{t,i} =\epsilon$.

   The case where  $j_t=i_{t}=0$ is again easy to handle, because the long and the short position cancel out 
   and we are in scenario~B at the end of the $(t+1)$-st period.

   Thus after we have traded at time~$u$ we are either in scenario~A or scenario~B, which proves the assertion if $\sigma_1=-1$.

   The proof for $\sigma_1=1$ is similar.
   We will first show that after trading at time 1 we can either be in scenario~A or scenario~B, and the statement of the proposition then follows 
   by induction exactly as in the case $\sigma_1=-1$.

   First we assume that $j_1>0$. Then, if the option $C_1(K_{1,j})$ expires out of the money, we are in scenario~A;
   otherwise we go short in the underlying and have
   \begin{align*}
    \phi_2^0 & \geq -D(1)\bigl(S_1^C-K_{1,j}\bigr)+ D(1)\bids_1 \geq k_{1,j}-\epsilon,
   \end{align*}
   which corresponds to scenario~B.
   If $j_1=0$, then we also have that $k_{j,1} =\epsilon$, and hence we are in scenario~B.
\end{proof}

According to Lemma~\ref{le:CVB}, there is a semi-static, self-financing trading strategy~$\Phi$ for the buyer of the contract $-CVB_u(\boldsymbol{\sigma},\boldsymbol{x},\boldsymbol{I},\boldsymbol{J})$,
such that~$(\phi_{u+1}^0,\phi^1_{u+1})$ only depends on $\sigma_u, k_{u,j}$ (the investor might have some surplus in the bank account).
In the following we will use this strategy and only write $-CVB_u(\sigma_u, k_{u,j})$ resp.\ $\bidy_u^{CVB}(\sigma_u, k_{u,j})$
instead of $-CVB_u(\boldsymbol{\sigma},\boldsymbol{x},\boldsymbol{I},\boldsymbol{J})$ resp.\ $\bidy_u^{CVB}(\boldsymbol{\sigma},\boldsymbol{x},\boldsymbol{I},\boldsymbol{J})$.
In the case where $\phi_u^0\geq 0$ and $\phi_u^1=0$  we will say that the calendar vertical basket expires out of the money; otherwise we will say that it expires in the money.

The next theorem states necessary conditions for the absence of arbitrage in markets with spread-bound $\epsilon\geq 0$.
\begin{thm} \label{thm:NACVB}
   Let $\epsilon\geq 0$, $s,t,u \in \mathcal{T}$ such that $s < t$ and $s < u$ and $i \in \{0,\dots,N_{t}\}$, $j \in \{0,\dots,N_{s}\}$, $l \in \{0,\dots,N_{u}\}$.
   Fix prices as at the beginning of Section~\ref{sec:notation}, with $k_{t,1}>\epsilon$
   for all $t\in\mathcal{T}$.
   Then, for all calendar vertical baskets with maturity $s \in \mathcal{T}$ and parameters $k_{s,j}$ and $\sigma_{s}$,
   the following conditions are necessary for $\epsilon$-consistency,
   \begin{enumerate}[label=(\textit{\roman*})]
     \item \label{it:NACVB1} 
       \begin{align} \label{eq:NACVB1}
         \frac{\bidy_s^{CVB}(\sigma_s,k_{s,j})-\asky_{t,i}}{\bigl(k_{s,j}-\epsilon\sigma_{s}\bigr)-\bigl(k_{t,i}+\epsilon \bigr)}  \leq   
         \frac{\asky_{u,l}-\bidy_s^{CVB}(\sigma_s,k_{s,j})}{k_{u,l}+\epsilon-\bigl(k_{s,j}-\epsilon\sigma_{s}\bigr)},
           \quad \text{if} \quad k_{t,i}+\epsilon < k_{s,j}- \epsilon\sigma_s  < k_{u,l}+\epsilon,
     \end{align}
     \item \label{it:NACVB2} 
       \begin{align} \label{eq:NACVB2}
         \frac{\asky_{u,l}-\bidy_s^{CVB}(\sigma_s,k_{s,j})}{k_{u,l}+\epsilon-\bigl(k_{s,j}-\epsilon\sigma_{s}\bigr)} \geq -1, \quad \text{if} \quad k_{s,j}- \epsilon\sigma_s  < k_{u,l}+\epsilon,
     \end{align}
     \item \label{it:NACVB3}
     \begin{align} \label{eq:NACVB3}
        \bidy_s^{CVB}(\sigma_s,k_{s,j})-\asky_{t,i} &\leq 0, \quad \text{if} \quad k_{s,j}- \epsilon\sigma_s  \geq k_{t,i}+\epsilon, 
      \end{align}
     \item \label{it:NACVB4} 
       \begin{align} \label{eq:NACVB4}
          \bidy_s^{CVB}(\sigma_s,k_{s,j})-\asky_{t,i}=0 \ \Rightarrow \ \asky_{t,i}=0,   \quad \text{if} \quad k_{s,j}- \epsilon\sigma_s  >k_{t,i}+\epsilon.
     \end{align}
   \end{enumerate}

\end{thm}

\begin{proof}
We will assume that $s < t\leq u$ and that $i,l>0$. The other cases can be dealt with similarly.
In all four cases \ref{it:NACVB1}--\ref{it:NACVB4} we will assume that until time~$s$ we followed the trading strategy described in 
Lemma~\ref{le:CVB}.

   \ref{it:NACVB1}
   If~\eqref{eq:NACVB1} fails, then we set
   \[ 
     \theta= \frac{k_{u,l}+\epsilon-\bigl(k_{s,j}-\epsilon\sigma_s\bigr)}{k_{u,l}-k_{t,i}} \in (0,1)
   \]
   and buy $\theta C_t(K_{t,i})+(1-\theta) C_u(K_{u,l})- CVB_s(\sigma_s, K_{s,j})$, making an initial profit. 
   If the calendar vertical basket $CVB_s(\sigma_s, K_{s,j})$ expires out of the money, then we have model-independent arbitrage. 
   Otherwise we have a short position in the underlying at time~$s$. In order to close the short position, we buy~$\theta$ units of the underlying at time~$t$,
   and we buy $1-\theta$ units of the underlying at time~$u$.
   The liquidation value of this strategy at time~$u$ is then non-negative,
   \begin{align*}
    (k_{s,j}-\epsilon\sigma_{s}+\epsilon)B(u) &+  \theta(S^C_{t}-K_{t,i})^+  \frac{B(u)}{B(t)}+ (1-\theta) (S^C_{u}-K_{u,l})^+ \\
          &+(\bids_s-S_s^C)\frac{B(u)}{B(s)}-\theta \asks_{t}\frac{B(u)}{B(t)}-(1-\theta)\asks_{u} \\[2mm]
   \geq \ & (k_{s,j}-\epsilon\sigma_{s})B(u) + \theta \frac{B(u)}{B(t)} \Bigl(S^C_{t}-K_{t,i}-\asks_{t} \Bigr) + (1-\theta) \Bigl(S^C_{u}-K_{u,l}- \asks_{u}\Bigr)\\
   \geq \ &  \Bigl(k_{s,j}-\epsilon\sigma_{s}-\theta k_{t,i} -(1-\theta)k_{u,l}  -\epsilon \Bigr) B(u)=0.
   \end{align*} 

   \ref{it:NACVB2}
    Next, assume that~\eqref{eq:NACVB2} fails. Then buying the contract
   \[
    C_u(K_{u,l})-CVB_s(\sigma_s, K_{s,j}) + k_{u,l}+\epsilon-(k_{s,j}-\epsilon\sigma_s)
   \]
   earns an initial profit. 
   If~$CVB_s(\sigma_s, K_{s,j})$ expires out of the money, then we leave the portfolio as it is. Otherwise we immediately enter a short position and close it 
   at time~$u$. The liquidation value is then non-negative,
   \begin{align*}
     (k_{s,j}-\epsilon\sigma_{s}+\epsilon)B(u)&+(\bids_s-S_s^C)\frac{B(u)}{B(s)}+ (S^C_{u}-K_{u,l})^+-\asks_u \\
                                              &+ \Bigl(k_{u,l}+\epsilon-(k_{s,j}-\epsilon\sigma_s) \Bigr)B(u) \geq 0.
   \end{align*} 

   \ref{it:NACVB3}
   If \eqref{eq:NACVB3} fails, then we buy the contract $C_t(K_{t,i})-CVB_s(\sigma_s,k_{s,j})$ for negative cost.
   Again we can focus on the case where $CVB_s(\sigma_s,k_{s,j})$ expires in the money.
   We sell one unit of the underlying at time~$s$ and close the short position at time~$t$. 
   The liquidation value of this strategy at time~$t$ is non-negative,
   \[
      (k_{s,j}-\epsilon\sigma_{s}+\epsilon)B(t)+ (\bids_s-S_s^C)\frac{B(t)}{B(s)}+ (S_t^C- K_{t,j})^+ - \asks_t \geq 0.
   \]
	
   \ref{it:NACVB4}
   We will show that there cannot exist an $\epsilon$-consistent model, if~\eqref{eq:NACVB4} fails. 
   In every model where the probability that $CVB_s(\sigma_s,k_{s,j})$ expires in the money is zero, 
   we could simply sell $CVB_s(\sigma_s,k_{s,j})$ and follow the trading strategy from
   Lemma~\ref{le:CVB}, realizing (model-dependent) arbitrage.
   On the other hand, if $CVB_s(\sigma_s,k_{s,j})$ expires in the money with positive probability, then we can use the same strategy as 
   in the proof of~\ref{it:NACVB3}. At time~$t$  the liquidation value of the portfolio is positive with positive probability.
\end{proof}

Note that, if $\epsilon=0,$ then $CVB_s(\sigma_s, k_{s,j})$ has the same payoff as $-C_s(K_{s,j})$. Keeping this in mind,
it is easy to verify that the conditions from Theorem~\ref{thm:NACVB} are a generalization of equations (4), (5) and (6) in~\cite{Co07}.

It remains open whether \eqref{eq:NACVB1}, \eqref{eq:NACVB2}, \eqref{eq:NACVB3} and \eqref{eq:NACVB4} are also sufficient for the existence of an $\epsilon$-consistent model. 

\begin{con}
Given the conditions stated in Theorem~\ref{thm:NACVB} the given prices are $\epsilon$-consistent with the absence of arbitrage if and only if 
\eqref{eq:NACVB1}, \eqref{eq:NACVB2}, \eqref{eq:NACVB3}, and \eqref{eq:NACVB4} hold.
There is weak arbitrage whenever \eqref{eq:NACVB1}, \eqref{eq:NACVB2}, and \eqref{eq:NACVB3} hold but~\eqref{eq:NACVB4} fails.
\end{con}

Theorem~\ref{thm:NACVB} can be used to find arbitrage opportunities associated with given market prices.
However, it might not be clear how to find parameters that satisfy the conditions
of Definition~\ref{def:CVB}. For the reader's convenience, we finish this section with an algorithm
which can be used to create CVBs given the prices at the beginning of Section~\ref{sec:notation}.
It is not hard to see that it yields all possible parameter configurations.
Once a particular CVB is chosen, its bid price can be obtained via~\eqref{eq:CVBprice}.

\begin{enumerate}[label=(\textit{\roman*})]
\item \label{it:algo1} Pick $j_1 \in \{0, \dots, N_1\}$ and $\sigma_1 \in \{-1,1\}$ and set $x_1= k_{1, j_1}-\epsilon\sigma_1$.
\item \label{it:algo2} Given $\{x_1,\dots, x_{t-1}\}$, $\{\sigma_1,\dots, \sigma_{t-1}\}$, 
                       $\{j_1, \dots, j_{t-1}\}$ and $\{i_2, \dots, i_{t-1}\}$ 
                       first pick $j_t \in \{0, \dots, N_{t}\}$. 
\item \label{it:algo3} Choose $\sigma_t$ distinguishing the following cases:
      \begin{itemize}
        \item if $k_{t, j_t} \geq x_{t-1}+\epsilon$ set $\sigma_t = -1$;
        \item if $k_{t, j_t} \leq x_{t-1}-\epsilon$ set $\sigma_t = 1$;
        \item if $k_{t, j_t} = x_{t-1}$ pick $\sigma_t \in \{-1,0,1\}$;
        \item if $k_{t, j_t} \in (x_{t-1}-\epsilon, x_{t-1}+\epsilon) \setminus \{x_{t-1}\}$ pick $\sigma_t \in \{-1,1\}$;
      \end{itemize}
\item \label{it:algo4} Set $x_t = k_{t, j_t}- \sigma_t\epsilon$ and pick $i_t \in \{0,\dots,N_t\}$ 
      such that either $k_{t,i_t} \leq x_{t-1}+\sigma_t\epsilon$ or $i_t=0$.
\item\label{it:algo5}  Repeat steps~\ref{it:algo2} to~\ref{it:algo4}.
\end{enumerate}



\section{Conclusion}\label{sec:conc}

We define the notion of $\epsilon$-consistent prices, meaning that a set of bid and ask prices for call options
and the underlying can be explained by a model with bid-ask spread bounded by~$\epsilon$.
For a single maturity, we solve the $\epsilon$-consistency problem, recovering the
trichotomy consistency/weak arbitrage/model-independent arbitrage from the frictionless
case~\cite{DaHo07}.
The interval of spread bounds for which a consistent model exists can be easily computed.
The multi-period problem seems to be rather difficult. As a first
step, we provide two results: Necessary explicit conditions, and equivalent semi-explicit
conditions. For the latter, we invoke a recent result from~\cite{GeGu18} on approximation
by peacocks. Finally, we note that Section~3.3 of the PhD thesis~\cite{Gu16}
discusses the multi-period problem under simplified assumptions. In particular,
it is assumed that only the underlying has a bid-ask spread, but not the options.

\appendix

\section{Proof of Theorem~\ref{thm:single}: $\epsilon$-consistency}\label{app:proof single}

   We first show that the conditions are necessary.
   Throughout the proof we will denote the option $C_1(K_{1,i})$ by $C^i$ to ease notation.

   \ref{it:NAsing1}
     Suppose that $1\leq i < j < l$ are such that~\eqref{eq:NAsing1} does not hold.
     We buy a  butterfly spread, which is the contract
     \[
       BF^{i,j,l}= \frac{1}{K_j-K_{i}}C^{i} + \frac{1}{K_{l}-K_{j}}C^{l}- \Bigl(\frac{1}{K_j-K_{i}}+\frac{1}{K_{l}-K_{j}} \Bigr)C^{j}
     \]
     and get an initial payment. Its payoff at maturity is positive if $S_1^C$ expires in the interval $(K_i, K_l)$ and zero otherwise, and so we have model-independent arbitrage. 
     
     If~\eqref{eq:NAsing1} fails for $i=0$ we buy the contract
     \[
       BF^{0,j,l}= \frac{1}{K_j-B \epsilon }S + \frac{1}{K_{l}-K_{j}}C^{l}- \Bigl(\frac{1}{K_j-B \epsilon}+\frac{1}{K_{l}-K_{j}} \Bigr)C^{j}
     \]
     and make an initial profit. 
     Note that $S$ denotes the underlying.
     At maturity the liquidation value of the contract is given by
     \[
       \frac{1}{K_j-B \epsilon }\bids_1+ \frac{1}{K_{l}-K_{j}}(S_1^C-K_l)^+ - \Bigl(\frac{1}{K_j-B \epsilon}+\frac{1}{K_{l}-K_{j}} \Bigr)(S_1^C-K_j)^+ 
     \]
     which is always non-negative.

   \ref{it:NAsing2}
     Suppose that~\eqref{eq:NAsing2} fails for $1 \leq i <l$. Then we buy a call spread $C^l-C^i$ and invest $k_l-k_i$ in the bank account. 
     This earns an initial profit, and at maturity the cashflow generated by the options is at least $K_i -K_l$, which means that we have arbitrage. 
     Now we consider the case where $i=0$. Note that in this case \eqref{eq:NAsing2} is equivalent to 
     \[
        \frac{\asky_{l}-\bids_0}{k_{l}+\epsilon} \geq -1. 
     \]
     If this fails we buy~$C^l$, sell one unit of the underlying, and invest $k_l+\epsilon$ in the bank account. 
     Again we earn an initial profit, and at maturity we close the short position and have thus constructed an arbitrage strategy.

   \ref{it:NAsing3}
      If \eqref{eq:NAsing3} fails for $0<i<j$, then we buy the call spread  $C^i-C^j$ and get an initial payment. Its payoff at maturity is always non-negative. 
      
      If \eqref{eq:NAsing3} fails for $i=0$, then we sell $C^j$ and buy one unit of the stock, which also yields model-independent arbitrage. 

   \ref{it:NAsing4}
      We show that we cannot find an arbitrage-free model for the given prices, if~\eqref{eq:NAsing4} fails. 
      Later, in Appendix~\ref{app:proof weak}, we will show that there is a weak arbitrage opportunity in this case 
      (which entails, according to Definition~\ref{def:wNA}, that there is no model-independent arbitrage).      
       
      In any model where $\pp(S_1^C > K_j)=0$ we could sell~$C^j$. As this option is never exercised, this yields arbitrage. 
      If on the other hand $\pp(S_1^C > K_j)>0$ and $i>0$, then we buy the call spread $C^i-C^j$ at zero cost. 
      At maturity the probability that the options generate a positive cashflow is positive. 
      If $i=0$, then we buy the contract $S-C^j$ instead, and at maturity the liquidation value of the portfolio 
      is given by $\bids_1-(S_1^C-K_j)$, which is positive with positive probability.
      This completes the proof of necessity.

    Now we show that the conditions in Theorem~\ref{thm:single} are sufficient for
    $\epsilon$-consistency, using Lemma~\ref{lem:overview}.
    We first argue that we may w.l.o.g.\ assume that $\asky_N=\bidy_N=0$.
    Indeed, we could choose 
   \[
     k_{N+1} \geq \max \Biggl\{ \frac{\asky_i  k_j - \bidy_j  k_i}{\asky_i-\bidy_j}: \ 0 \leq i<j \leq N,
      \ \asky_i-\bidy_j>0 \Biggr\} \vee \max\{k_j + \bidy_j: 0 \leq j \leq N\}
   \]
   and set $\asky_{N+1}=\bidy_{N+1}=0$. Then all conditions from Theorem~\ref{thm:single} would still hold, 
   if we included an additional option with strike $k_{N+1}$ and bid and ask price equal to zero. 
   So from now on we assume that $\asky_N=\bidy_N=0$.

   We will first show that, for $s \in \{0, \dots,N \}$, we can find $e_s \in [\bidy_s, \asky_s ]$ such that the linear interpolation $L$ of the points 
   $(k_s, e_s), s \in \{0, \dots,N \}$, is convex, decreasing, 
   and such that the right derivative of~$L$ satisfies $L'(k_0)\geq -1$. Then we will extend $L$ to a call function, and its associated measure will be the law of $D(1)S_1^C$.
   The sequence $(e_s)_{s \in \{1,\dots,N\}}$ can then be interpreted as 
   shadow prices of the options with strikes $(k_s)_{s \in \{1,\dots,N\}}.$

    Before we start we will introduce some notation. 
   For $j,l \in \{1,\dots,N\}, j<l$ we denote the line connecting $(k_j, \bidy_j)$ and $(k_l, \asky_l)$  by $f_{j,l}$, i.e.,
   \[
    f_{j,l}(x)= \bidy_{j}+ \frac{\asky_{l}-\bidy_{j}}{k_{l}-k_{j}} \cdot (x-k_{j}).
   \]
   If $e_s$ is known for some $s \in \{0 \dots,N \}$, then we denote the line connecting $(k_s,e_s)$ and $(k_i, \asky_i), i \in \{s+1, \dots, N\}$ 
   by $g_{s,i}$, i.e.,
   \[
    g_{s,i}(x)= e_s+ \frac{\asky_{i}-e_s}{k_{i}-k_{s}} \cdot (x-k_{s}).
   \]
   The linear interpolation of $(k_{s},e_{s})$ and $(k_j, \bidy_j), j \in \{s+1, \dots, N\} $ will be denoted by $h_{s,j}$,
   \[
    h_{s,j}(x)= e_s+ \frac{\bidy_{j}-e_s}{k_{j}-k_{s}} \cdot (x-k_{s}).
   \]
   We will refer to the slopes of these lines as $f'_{j,l},g'_{s,i}$ and $h'_{s,j}$ respectively. 

   First we will construct~$e_0$. In order to get all desired properties -- this will become clear towards the end of the proof 
   --~$e_0$ has to satisfy
   \begin{equation}\label{eq:e0low}
     e_0 \geq \max_{0 \leq j < l \leq N} f_{j,l}(k_0),
   \end{equation} 
   and 
   \begin{equation}\label{eq:e0up}
      e_0 \leq \min_{0\leq i \leq N } (k_i+ \asky_i-k_0 ).
   \end{equation} 
   We will argue that we can pick such an $e_0$ by showing that 
   \begin{align} \label{eq:e01}
      f_{j,l}(k_0) \leq k_i+ \asky_i-k_0, \quad  i,j,l \in \{0,\dots, N\}, j\leq l.
   \end{align}
   Using~\eqref{eq:NAsing2} twice we can immediately see that~\eqref{eq:e01} holds for $i \geq j$, 
   \[
    f_{j,l}(k_0) \leq  \bidy_j +k_j -k_0 \leq \asky_i+ k_i-k_0.
   \]
   If on the other hand $i <j$  we rewrite the right hand side of~\eqref{eq:e01} to $h_i(k_0)$, where $h_i(x)=-x+\asky_i+k_i$.
   Then from \eqref{eq:NAsing1} we get that
   \begin{equation*}
     f_{j,l}(k_i) \leq \asky_i= h_i(k_i),
   \end{equation*}
   and as $f'_{j,l} \geq -1 = h'_i,$ the inequality follows. 

   The above reasoning shows that existence of an $e_0$ such that~\eqref{eq:e0low} and~\eqref{eq:e0up} hold. 
   Next we want to construct~$e_1$ for given~$e_0$. It has to satisfy the requirements
   \begin{equation}\label{eq:e1low}
      e_1 \geq \max_{1\leq j<l \leq N} f_{j,l}(k_1)  \vee(e_0+k_0-k_1)
   \end{equation} 
   and 
   \begin{equation}\label{eq:e1up}
     e_1 \leq \min_{1 \leq i \leq N} g_{0,i}(k_1).
   \end{equation} 
   Again we will argue that we can pick such an $e_1$ by considering the corresponding inequalities. First note that the inequality
   \begin{equation*}
     e_0+k_0-k_1\leq g_{0,i}(k_1), \quad i \in \{1, \dots, N\},
   \end{equation*}
   follows directly from \eqref{eq:e0low}. Next we want to prove that
   \begin{equation}\label{eq:e11}
    f_{j,l}(k_1) \leq g_{0,i}(k_1), \quad i,j,l \in \{1,\dots,N\}, j <l.
   \end{equation}
   Therefore observe that 
   \begin{equation*}
    f_{j,l}(k_0) \leq e_0= g_{0,i}(k_0).
   \end{equation*}
   If $i < j$ \eqref{eq:e11} follows from~\eqref{eq:NAsing1}, because $f_{j,l}(k_i) \leq \asky_i= g_{0,i}(k_i)$. For $i=j$ 
    we may simply use the fact that $\bidy_i \leq \asky_i$ and hence we get that 
   $f_{j,l}(k_i) \leq \asky_i= g_{0,i}(k_i)$. For $i >j$ we may use  $f_{j,l}(k_0) \leq e_0 = h_{0,j}(k_0)$ to get
   \begin{equation*}
    f_{j,l}(k_1) \leq h_{0,j}(k_1) \leq g_{0,i}(k_1), 
   \end{equation*}
   where the last inequality follows from the fact that $h_{0,j}(k_0) = g_{0,i}(k_0)=e_0$ and that 
   \begin{equation*}
    h'_{0,j} = \frac{\bidy_j-e_0}{k_j-k_0} \leq \frac{\asky_i-e_0}{k_i-k_0} = g'_{0,i}. 
   \end{equation*}
   In the last step we used that $e_0 \geq f_{j,i}(k_0)$. 

   Now suppose we have already constructed $e_{1}, \dots e_{s-1}, s \in {1,\dots, N}$. Then for $r \in \{1, \dots s-1 \}$ we have that
   \begin{align}\label{eq:erlow}
     e_r \geq \biggl(e_{r-1}+\frac{e_{r-1}-e_{r-2}}{k_{r-1}-k_{r-2}} \cdot (k_r-k_{r-1})\biggr)  \vee \max_{ r \leq j<l \leq N }  f_{j,l}(k_r),
   \end{align}
   and 
   \begin{align}\label{eq:erup}
   e_r \leq \min_{r \leq i \leq N} g_{r-1,i}(k_r).
   \end{align}
   Note that for $r=1$ we need an appropriate~ $e_{-1}$ and $k_{-1}$ in order for~\eqref{eq:erlow} to hold.
   For instance, we can set
   $k_{-1}=-1$ and $e_{-1}= e_0-(k_{0}+1)\cdot (e_1-e_0)/(k_{1}-k_{0})$.

   We want to show that we can choose $e_{s}$ such that \eqref{eq:erlow} and \eqref{eq:erup} hold for $r=s$.
   First, the inequality
   \[
    e_{s-1}+\frac{e_{s-1}-e_{s-2}}{k_{s-1}-k_{s-2}} \cdot (k_s-k_{s-1}) \leq g_{s-1,i}(k_s),
    \quad i \in\{s, \dots, N\},
   \]
   is equivalent to 
   \[
    \frac{e_{s-1}-e_{s-2}}{k_{s-1}-k_{s-2}} \leq \frac{\asky_i-e_{s-1}}{k_{i}-k_{s-1}}
   \]
   which is again equivalent to
   \[
    e_{s-1} \leq g_{s-2,i}(k_{s-1})
   \]
   and holds by \eqref{eq:erup}.

   The inequality
   \[
    f_{j,l}(k_{s})  \leq g_{s-1,i}(k_s), \quad i,j,l \in \{s,\dots,N\}, j<l,
   \]
   can be shown using the same arguments as before:
   first we note that $f_{j,l}(k_{s-1})  \leq e_{s-1}= g_{s-1,i}(k_s)$ and then we distinguish between $i<j$, $i=j$ and $i>j$.

   We have now constructed a finite sequence $(e_s)_{s \in \{0, \dots,N \}}$. 
   Observe that for all $s \in \{0, \dots,N \}$ the bounds on $e_s$ from above, namely~\eqref{eq:e0low} and~\eqref{eq:e0up} for $s=0$,~
   \eqref{eq:e1low} and~\eqref{eq:e1up} for $s=1$ and~\eqref{eq:erlow} and~\eqref{eq:erup} for $s>1$, ensure that $e_s \in [\bidy_s,\asky_s]$. 
   Denote by $L: [k_0, k_N]\rightarrow \setR$ the linear interpolation of the points $(k_s,e_s), s \in \{0, \dots,N \}$. Then $L$ is convex,
   which is easily seen from
   \[
     e_s \geq e_{s-1}+\frac{e_{s-1}-e_{s-2}}{k_{s-1}-k_{s-2}} \cdot (k_s-k_{s-1}), \quad s \geq 2.
   \]
   Furthermore, by \eqref{eq:e1low}
   \[
    L'(k_0)= \frac {e_1-e_0}{k_1-k_0} \geq -1.
   \]
   Finally, $L$ is strictly decreasing on $\{L>0\}$ which is most easily seen from $e_{s} \leq g_{s-1,N}(k_s)$.
   Therefore~$L$ can be extended to a call function~$R$ as follows (see Proposition~2.3
   in~\cite{GeGu18}),
   \begin{align*}
     R(x)=
		\begin{cases}
		   L(k_0)+k_0-x, & x \leq k_0,\\
		   L(x),& x \in [k_0,k_N],\\
		   0,& x\geq k_N.
		\end{cases}
   \end{align*}
   Let $\mu$ be the associated measure. Then $\mathbb{E}\mu = R(0) =  L(k_0)+k_0 \in [\bids_0-\epsilon,\asks_0+\epsilon]$.
   If $\mathbb{E}\mu < \bids_0$ we define a measure $\nu$ by setting $\nu(A)=\mu(A-\epsilon)$ for Borel sets~$A$.
   The set $A-\epsilon$ is defined as $\{a-\epsilon: a \in A \}$. Then $\mathbb{E}\nu=\mathbb{E}\mu+\epsilon \in [\bids_0,\asks_0]$.
   Similarly, if $\mathbb{E}\mu > \asks_0$ we define $\nu(A)=\mu(A+\epsilon)$ for Borel sets~$A$,
   and if $\mathbb{E}\mu \in [\bids_0,\asks_0]$ then we simply set $\nu=\mu$. 
   Furthermore for $x < k_0$ we have that $R'(x)=-1$, therefore $\mu$ has support $[\epsilon,\infty)$.
   Clearly, by definition of~$\nu$, we have that $W^\infty(\mu,\nu)\leq \epsilon$.
   Hence, by Lemma~\ref{lem:overview} the prices are~$\epsilon$-consistent with the absence of arbitrage.

\section{Proof of Theorem~\ref{thm:single}: weak arbitrage}\label{app:proof weak}

   As we have seen in part~\ref{it:NAsing4} of the necessity proof of 
   Theorem~\ref{thm:single} (see Appendix~\ref{app:proof single}),
   there is an arbitrage opportunity that depends on
   the null sets of the model. We will show that there is no model-independent arbitrage strategy. 
   Suppose, on the contrary, that there is one.
   Then we can construct a portfolio $\phi_1^0 +\phi_1^1 S +\sum_{l=1}^N \phi^l C(K_l)$, where $\phi_1^0, \phi_1^1, \phi^l \in \setR$,
   such that its initial cost is negative, i.e.,
   \[
      \phi_{1}^0 + \bigl((\phi_1^{1})^+\asks_0- (\phi_1^{1})^-\bids_0 \bigr)+ \sum_{l=1}^N \bigl( (\phi^l)^+\asky_l- (\phi^l)^-\bidy_l \bigr)<0,
   \]
   and such that the liquidation value at maturity is non-negative, i.e.,
   \[ 
    \phi_{1}^0 B(1)+\bigl((\phi_1^{1})^+\asks_1- (\phi_1^{1})^-\bids_1 \bigr)+\sum_{l=1}^N \phi^l (S_1^C-K_l)^+ \geq 0.
   \]
   Without loss of generality we can assume that $|\phi_1^0|+ |\phi_1^1|+\sum_{l=1}^N |\phi^l|=1$. 

   Next we construct $e_0, \dots, e_N$ as in the sufficiency proof of Theorem~\ref{thm:single}.
   Clearly, we then have $\asky_i=e_i=e_{i+1}=\dots=e_N$. 
	 The idea is to consider a market with slightly different shadow prices $\widetilde{e}_l$,
	 which can be obtained from the original shadow prices $e_l$ by shifting them down.
	 More precisely, we set
	 \[
	   l_0 = \max\{ l : 0\leq l\leq N,\ e_l+k_l=e_0+k_0 \},
	 \]
	 define
   \[
      z= \min \Biggl\{-\frac {r_\Phi}{2},\
		  \bigl(e_{l_0+1}+k_{l_0+1}-e_{l_0}-k_{l_0}\bigr)\cdot\frac{\sum_{s=l_0}^N \limits (k_s-k_{l_0})}{k_{l_0+1}-k_{l_0}},
		  \ e_N \cdot \frac{\sum_{s=l_0}^N \limits (k_s-k_{l_0})}{k_N-k_{l_0}}\Biggr\},
   \]
   and put $\widetilde{e}_l =e_l$ for $l \leq l_0$ and for $l> l_0$ 
   \[
    \widetilde{e}_l = e_l - z \frac{k_l-k_{l_0}}{\sum_{s=l_0}^N \limits (k_s-k_{l_0})}.
   \]
   Now consider a modified set of prices, where bid and ask price of the $l$-th call, $0\leq l \leq N$,
   are both defined by~$\widetilde{e}_l$.
   It is easy to check that these prices satisfy
   all conditions from Theorem~\ref{thm:single},
   and hence do not admit any arbitrage opportunities.
   Indeed, the second expression  in the definition of $z$ guarantees 
   that $e_{l_0+1}$ is not too small, i.e.,
   \[
     \frac{e_{l_0+1}-e_{l_0}}{k_{l_0+1}-k_{l_0}}\geq-1,
   \]
   and the third expression ensures that $\widetilde{e}_N$ is not too small, i.e., $\widetilde{e}_N \geq 0$.
   A simple calculation shows that
   \begin{align*}
      \phi_{1}^0 + \bigl((\phi_1^{1})^+\asks_0- (\phi_1^{1})^-\bids_0 \bigr) +\sum_{l=1}^N \phi^l \widetilde{e}_l &
    = \phi_{1}^0 + \bigl((\phi_1^{1})^+\asks_0- (\phi_1^{1})^-\bids_0 \bigr) +\sum_{l=1}^N \phi^l e_l - \sum_{l=l_0+1}^N \phi^l (e_l-\widetilde{e}_l) \\ 
    & \leq r_\Phi- \sum_{l=l_0+1}^N \phi^l (e_l-\widetilde{e}_l) \\
		& \leq r_\Phi+ z \sum_{l=l_0+1}^N |\phi^l| \frac{k_l-k_{l_0}}{\sum_{s=l_0}^N \limits (k_s-k_{l_0})}  \\
		& \leq r_\Phi+ z  \leq \frac {r_\Phi}{2} < 0,
   \end{align*}
   and so the portfolio $C_{\Phi}$ in the modified market has negative cost. 
   But its liquidation value at maturity is unchanged and hence non-negative, 
   and we have thus constructed a model-independent arbitrage strategy for the modified set of prices, which is a contradiction.

\bibliographystyle{siam}
\bibliography{literatur}

\end{document}